\newcommand{\lyxmathsym}[1]{\ifmmode\begingroup\def\b@ld{bold}
  \text{\ifx\math@version\b@ld\bfseries\fi#1}\endgroup\else#1\fi}
\theoremstyle{remark}
\newtheorem{rem}{Remark}
\theoremstyle{plain}
	  \newtheorem{lem}{Lemma}
\theoremstyle{plain}
	  \newtheorem{prop}{Proposition}
\theoremstyle{plain}
		\newtheorem{thm}{Theorem}
\renewcommand{\maketag@@@}[1]{\hbox{\m@th\normalsize\normalfont#1}}
\newcommand{\newac}{\newacronym}
\newcommand{\ac}{\gls}
\newcommand{\Ac}{\Gls}
\newcommand{\acpl}{\glspl}
\begin{document}
\title{Robust Transmit Beamforming for Secure Integrated Sensing and Communication}
\author{Zixiang Ren, Ling Qiu, Jie Xu, and Derrick Wing Kwan Ng\thanks{Part of this paper has been presented at the IEEE International Conference
on Communications (ICC), Seoul, South Korea, 16-20 May 2022 \cite{ren2021optimal}.}\thanks{Z. Ren is with Key Laboratory of Wireless-Optical Communications,
Chinese Academy of Sciences, School of Information Science and Technology,
University of Science and Technology of China, Hefei 230027, China,
and the School of Science and Engineering (SSE) and the Future Network
of Intelligence Institute (FNii), The Chinese University of Hong Kong
(Shenzhen), Shenzhen 518172, China (e-mail: rzx66@mail.ustc.edu.cn).}\thanks{L. Qiu is with Key Laboratory of Wireless-Optical Communications,
Chinese Academy of Sciences, School of Information Science and Technology,
University of Science and Technology of China, Hefei 230027, China,
(e-mail: lqiu@ustc.edu.cn).}\thanks{J. Xu is with the SSE and the FNii, The Chinese University of Hong
Kong (Shenzhen), Shenzhen 518172, China (e-mail: xujie@cuhk.edu.cn).}\thanks{D. W. K. Ng is with the University of New South Wales, Sydney, NSW
2052, Australia (e-mail: w.k.ng@unsw.edu.au).}\thanks{L. Qiu and J. Xu are the corresponding authors.}\vspace{-1.5cm}}
\maketitle
\begin{abstract}
This paper studies a downlink secure integrated sensing and communication
(ISAC) system, in which a multi-antenna base station (BS) transmits
confidential messages to a single-antenna communication user (CU)
while performing sensing on targets that may act as suspicious eavesdroppers.
To ensure the quality of target sensing while preventing their potential
eavesdropping, the BS combines the transmit confidential information
signals with additional dedicated sensing signals, which play a dual
role of artificial noise (AN) for degrading the qualities of eavesdropping
channels. Under this setup, we jointly design the transmit information
and sensing beamforming, with the objective of minimizing the weighted
sum of beampattern matching errors and cross-correlation patterns
for sensing subject to secure communication constraints. The robust
design takes into account the channel state information (CSI) imperfectness
of the eavesdroppers in two practical CSI error scenarios. First,
we consider the scenario with bounded CSI errors of eavesdroppers,
in which the worst-case secrecy rate constraint is adopted to ensure
secure communication performance. In this scenario, we present the
optimal solution to the worst-case secrecy rate constrained sensing
beampattern optimization problem, by adopting the techniques of S-procedure,
semi-definite relaxation (SDR), and a one-dimensional (1D) search,
for which the tightness of the SDR is rigorously proved. Next, we
consider the scenario with Gaussian CSI errors of eavesdroppers, in
which the secrecy outage probability constraint is adopted. In this
scenario, we present an efficient algorithm to solve the more challenging
secrecy outage-constrained sensing beampattern optimization problem,
by exploiting the convex restriction technique based on the Bernstein-type
inequality, together with the SDR and 1D search. Finally, numerical
results show that the proposed designs can properly adjust the information
and sensing beams to balance the tradeoffs among communicating with
CU, sensing targets, and confusing eavesdroppers, so as to achieve
desirable sensing transmit beampatterns while ensuring the CU's secrecy
requirements for the two scenarios. 
\end{abstract}

\begin{IEEEkeywords}
Integrated sensing and communication (ISAC), physical layer security,
artificial noise (AN), transmit beamforming, secrecy rate, optimization. 
\end{IEEEkeywords}

\section{Introduction}

\Ac{isac} has been widely recognized as one of the key technologies
to realize future beyond fifth-generation (B5G) and sixth-generation
(6G) wireless networks \cite{LiuMasJ20,RahLusJ20,liu2021integrated},
in which sensing is integrated as new functionality to enable environment-aware
intelligent applications, such as autonomous driving, industrial automation,
and \ac{uavs} \cite{wUxUj21}. Different from conventional wireless
sensing and communication systems that are separately designed to
compete with each other on spectrum, energy, and hardware resources,
ISAC unifies the two functionalities into a joint design to exploit
their mutual benefit via seamless collaboration \cite{liu2021integrated}.
By allowing the dual use of wireless infrastructures and scarce spectrum/power
resources for performing communication and sensing simultaneously,
ISAC is expected to achieve enhanced sensing and communication performance
at a reduced cost.

Various research efforts have been devoted in the recent literature
to unleash the ISAC potential from different aspects, such as multi-antenna
beamforming \cite{LiuHuangNirJ20,hua2021optimal}, waveform design
\cite{LiuZhouJ18}, and resource allocation \cite{ShiWangJ19,liu2021integrated}.
Among others, multi-antenna beamforming is particularly appealing,
in which multiple signal-carrying beams can be steered towards both
the desired \acpl{cu} and sensing targets for simultaneous multi-user
communications and multi-target radar sensing over the same frequency
band. For instance, the authors in \cite{LiuZhouJ18} proposed to
reuse the multiple users' information signals for radar sensing, in
which the transmit multi-user information beamforming vectors are
optimized to minimize the inter-user interference at the \acpl{cu},
while ensuring the beampattern requirements for multi-target sensing.
However, when the number of \acpl{cu} is less than that of transmit
antennas, the exploitation of information beams only may lead to sensing
beampattern distortion, due to the limited total \ac{dof} utilized
in the information beams. To resolve this issue, the authors in \cite{LiuHuangNirJ20,hua2021optimal}
proposed to additionally send the dedicated sensing signal beams together
with the information beams such that the full spatial \ac{dof}
can be exploited for multi-target radar sensing. In particular, the
information and sensing beams were jointly optimized in \cite{LiuHuangNirJ20,hua2021optimal}
to achieve desired transmit beampattern for radar sensing, while ensuring
the individual \ac{sinr} constraints at the CUs.

Recently, security has become increasingly important for wireless
networks due to the inherent broadcast natures, while the emergence
of \ac{isac} introduces new communication security issues. In particular,
as the information-bearing signals are reused for radar sensing, the
optimized transmit information beams in ISAC would be steered towards
sensing targets to ensure sensing performance. This, however, may
lead to potential information leakage issues, as in practice the sensing
targets can be suspicious eavesdroppers (e.g., suspicious UAV targets)
that may try to intercept the messages sent from the \ac{isac}
transmitter\textcolor{blue}{{} }\cite{DelAnaJ18,SuLiuChrJ21,WeiM22}.
To deal with the new security concern in \ac{isac}, physical layer
security has emerged as a viable solution to prevent potential eavesdropping
from the sensing targets. Different from conventional cryptography-based
security technologies, physical layer security aims to provide perfectly
secure data delivery from the information-theoretic perspective, by
exploiting the difference in physical properties between the legitimate
communication channels and the eavesdropping channels (e.g., noise,
interference, and fading). In practice, the secrecy rate and secrecy
capacity have been employed as key performance metrics in physical
layer security \cite{wyner1975wire,GopPraJ08,ChenKwanJ17}. Specifically,
the secrecy capacity is defined as the difference between the capacity
of the legitimate channels and that of the eavesdropping channels,
which indicates the maximum achievable data rate that can be securely
transmitted over the wireless channels, provided that the eavesdroppers
cannot intercept any information \cite{GoeSatJ08}. To enhance the
secrecy performance, \ac{an} has emerged as a promising technique
in physical layer security \cite{WuKhiJ18}, in which the transmitters
can inject AN superimposing with the confidential information concurrently
for protecting the confidential message transmission via confusing
the eavesdroppers. Furthermore, in secure ISAC systems, AN can also
be exploited as sensing signals, with a dual role of jamming the eavesdroppers
and sensing the targets. As such, the interplay between ISAC and physical
layer security enables a multi-function wireless network integrating
sensing, communication, and security \cite{WeiM22}.

In general, the availability of \ac{csi} at the transmitter is
crucial for the implementation of physical layer security and multi-antenna
transmit beamforming \cite{WuKhiJ18,wangbozhaoJ19}. Conventionally,
to obtain the \ac{csi} of a \ac{cu}, the transmitter can either
exploit the CU's forward-link channel estimation and feedback in frequency
division duplex (FDD) systems, or perform the reverse-link channel
estimation based on the channel reciprocity in time division duplex
(TDD) systems. To acquire the CSI of eavesdroppers, the transmitter
may overhear their emitted radio signals for channel estimation if
the eavesdroppers are active in transmission. In general, however,
it is more difficult to obtain accurate CSI of eavesdroppers. As such,
how to jointly design the transmission of both information and AN/sensing
signals in the case of the imperfect CSI of eavesdroppers is an interesting
problem in secure \ac{isac} systems. 

There have been a handful of prior works investigating the secrecy
issue in \ac{isac} systems. For instance, the authors in \cite{DelAnaJ18}
studied the secure \ac{isac} system with one \ac{cu} and one
eavesdropping target. By considering the perfect \ac{csi} at the
transmitter, the transmit covariance matrices of both information
signals and \ac{an} were jointly optimized to maximize the \ac{cu}'s
secrecy rate while ensuring the received \ac{sinr} requirement
for sensing. Also, the authors in \cite{SuLiuChrJ21} considered the
secure ISAC system with multiple \acpl{cu} and one eavesdropping
target by assuming \ac{los} eavesdropping channels. For both scenarios
with perfect and imperfect \ac{csi}, the \ac{snr} at the eavesdropping
target was minimized in \cite{SuLiuChrJ21} while ensuring the \ac{cu}s'
individual \ac{sinr} and certain sensing beampattern requirements.
However, due to the non-convexity of the formulated problems, only
sub-optimal beamforming/precoding designs were obtained in these prior
works \cite{SuLiuChrJ21,DelAnaJ18} under their respective setups.
More importantly, these prior works \cite{SuLiuChrJ21,DelAnaJ18}
on secure ISAC with single target sensing only adopted the beampattern
matching mean squared error (MSE) as the sensing performance metric
but ignored the impacts of cross-correlation patterns for multi-target
sensing that are shown to be important \cite{StoPETLiJ07,LiuHuangNirJ20}.

This paper studies a secure \ac{isac} system with a multi-antenna
\ac{bs}, a single-antenna \ac{cu}, and multiple sensing targets,
in which a portion of targets are untrusted (acting as suspicious
eavesdroppers) and the remaining are trusted. Different from prior
works, e.g., \cite{SuLiuChrJ21,DelAnaJ18}, we consider the weighted
sum of the beampattern matching MSE and cross correlation patterns
as the sensing performance measure. We consider that the \ac{bs}
sends one confidential information signal beam and multiple dedicated
sensing signal beams that also play the AN role. It is assumed that
the BS has the perfect CSI of the CU but imperfect CSI of the potential
eavesdroppers subject to two different types of errors, namely the
bounded and Gaussian errors, respectively. For these two scenarios,
we jointly optimize the transmit information and sensing beamforming
vectors to minimize the weighted sum of the beampattern matching error
and cross-correlation patterns for sensing, while ensuring the secure
communication requirements. The main results are listed as follows. 
\begin{itemize}
\item First, we consider the scenario with bounded CSI errors of the eavesdroppers,
in which the minimum worst-case secrecy rate requirement is adopted
at the \ac{cu} to ensure secrecy performance. In this scenario,
the worst-case secrecy rate-constrained sensing beampattern optimization
problem is non-convex and thus difficult to solve in general. Fortunately,
we obtain the optimal solution to this problem via the techniques
of S-procedure, \ac{sdr}, and a \ac{1d} search. In particular,
we rigorously prove the tightness of \ac{sdr}, based on which the
optimal solution with rank-one information beamforming and general-rank
sensing beamforming can always be obtained via the proposed construction
approach. 
\item Next, we consider the scenario with Gaussian CSI errors of the eavesdroppers.
In this scenario, a maximum secrecy outage probability constraint
is imposed to ensure secure communication performance such that the
likelihood of the achievable secrecy rate being less than a required
level is small. The secrecy outage-constrained sensing beampattern
optimization problem, however, is more difficult to solve compared
with the worst-case counterpart, as the maximum secrecy outage constraint
does not admit a closed-form expression with respect to the transmit
beamforming vectors. To solve this problem, we first apply the convex
restriction technique based on the \ac{bti} to approximate the
original problem. Then, we solve the restricted approximate problem
via the SDR and 1D search, and finally propose an efficient construction
method to find the desired solution to the original problem with rank-one
information and general-rank sensing beamformers.
\item Finally, we provide numerical results to validate the effectiveness
of our proposed joint beamforming designs for the two scenarios with
imperfect CSI of eavesdroppers. It is shown that in our proposed solution,
the information beams are designed toward the \ac{cu} and the trusted
targets, and the sensing/AN beams are designed toward both untrusted
and trusted targets, thus maximizing the sensing performance while
ensuring the CU's secure communication requirement. It is also shown
that as compared to a conventional benchmarks, our proposed joint
beamforming design achieves better sensing performance in terms of
lower beampattern matching errors and higher accuracy for angle estimation
(by considering the practical Capon estimation algorithm \cite{StoPETLiJ07,xu2008target}
at the sensing receiver). We also illustrate the impacts of CSI errors
on the ISAC performance. It is shown that our proposed designs are
robust to the CSI errors of eavesdroppers. 
\end{itemize}

The remainder of this paper is organized as follows. Section II introduces
the system model and formulates the secrecy constrained sensing beampattern
optimization problems for the two scenarios with bounded and Gaussian
CSI errors of eavesdroppers. Sections III and IV develop efficient
algorithms for solving the two problems in the two scenarios, respectively.
Section V presents numerical results and Section VI concludes this
paper.

\textit{Notations}: Throughout this paper, vectors and matrices are
denoted by bold lower- and upper-case letters, respectively. $\mathbb{C}^{N\times M}$
denotes the space of $N\times M$ matrices with complex entries. $\mathbb{S}^{N}$
denotes the space of $N\times N$ Hermitian matrices, $\mathbb{S}_{+}^{N}$
denotes the space of $N\times N$ positive semi-definite matrices,
and $\mathbb{S}_{++}^{N}$ denotes the space of $N\times N$ positive
definite matrices. $\boldsymbol{I}$ and $\boldsymbol{0}$ represent
an identity matrix and an all-zero vector/matrix with appropriate
dimensions, respectively. For a complex-valued element $a$, $\textrm{Re}(a)$
and $\mathrm{Im}(a)$ denote its real part and imaginary part, respectively.
For a square matrix $\boldsymbol{A}$, $\textrm{Tr}(\boldsymbol{A})$
denotes its trace and $\boldsymbol{A}\succeq\boldsymbol{0}$ means
that $\boldsymbol{A}$ is positive semi-definite. For a complex arbitrary-size
matrix $\boldsymbol{B}$, $\textrm{rank}(\boldsymbol{B})$, $\boldsymbol{B}^{T}$,
$\boldsymbol{B}^{H}$, and $\boldsymbol{B}^{c}$ denote its rank,
transpose, conjugate transpose, and complex conjugate, respectively.
$\mathbb{E}(\cdot)$ denotes the stochastic expectation, $\|\cdot\|$
denotes the Euclidean norm of a vector, and $\mathcal{CN}(\boldsymbol{x},\boldsymbol{Y})$
denotes the circularly symmetric complex Gaussian (CSCG) random distribution
with mean vector $\boldsymbol{x}$ and covariance matrix $\boldsymbol{Y}$.
$(x)^{+}\triangleq\max(x,0)$.


\section{System Model and Problem Formulation}

\subsection{Signal Model}

\begin{figure}
\includegraphics[width=10cm,height=6cm]{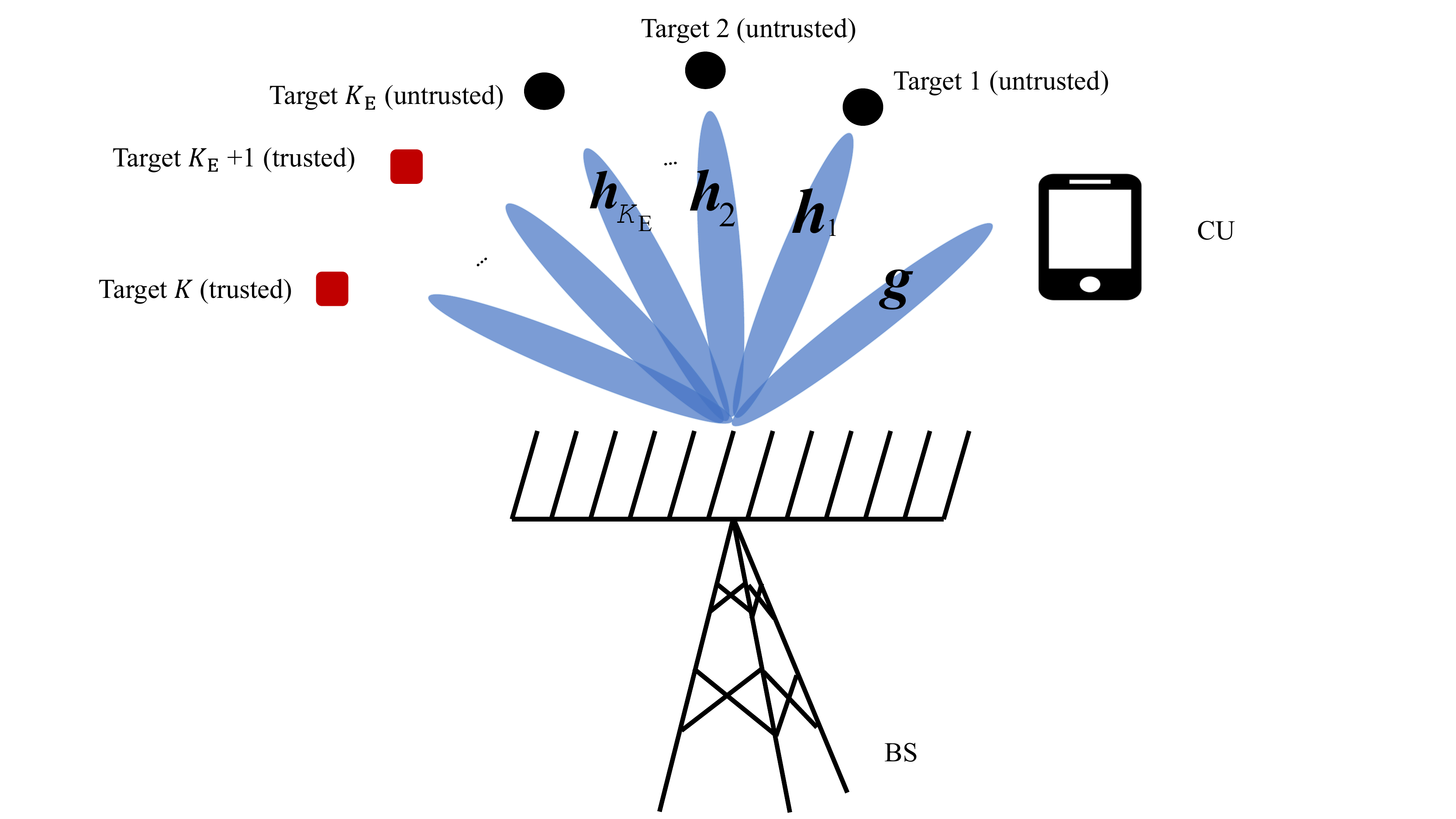}\centering\caption{\label{fig1}Illustration of the secure ISAC system with both the
trusted and untrusted targets. }
\end{figure}

We consider a secure \ac{isac} system as shown in Fig. \ref{fig1},
which consists of a BS equipped with a \ac{ula} with $N>1$ antenna
elements, a \ac{cu} with a single antenna\footnote{Our proposed design principles can be extended to the case with multiple
CUs each with multiple antennas, by accordingly modifying the secrecy
rate formulas.}, and $K$ sensing targets. Let $\mathcal{K}\triangleq\{1,...,K\}$
denote the set of targets, among which the first $K_{\mathrm{E}}$
ones (with $K_{\mathrm{E}}\le K$) are assumed to be untrusted or
suspicious eavesdroppers, denoted by set $\mathcal{K}_{\mathrm{E}}\ensuremath{\triangleq}\{1,...,K_{\mathrm{E}}\}\subseteq\mathcal{K}$,
and the remaining $K-K_{\mathrm{E}}$ targets are assumed to be trusted
ones. We consider that the BS adopts the linear transmit beamforming
to send the confidential message $s_{0}(n)\in\mathbb{C}$ to the \ac{cu},
where $s_{0}(n)\in\mathbb{C}$ is a CSCG random variable with zero
mean and unit variance, i.e., $s_{0}(n)\sim\mathcal{CN}(0,1)$, with
$n\in\{1,\dots,L\}$ denoting the symbol index. We use $\boldsymbol{w}_{0}\in\mathbb{C}^{N\times1}$
to denote the corresponding transmit information beamforming vector.
Besides the information signal $s_{0}(n)$, the BS also sends a dedicated
sensing signal or \ac{an}, denoted by vector $\boldsymbol{s}_{1}(n)\in\mathbb{C}^{N\times1}$,
to facilitate the target sensing and to confuse the eavesdropping
targets. Here, $\boldsymbol{s}_{1}(n)$ is independent from $s_{0}(n)$
and is a CSCG random vector with zero mean and covariance matrix $\boldsymbol{S}=\mathbb{E}(\boldsymbol{s}_{1}(n)\boldsymbol{s}_{1}^{H}(n))\succeq\boldsymbol{0},\textrm{i.e}.,$
$\boldsymbol{s}_{1}(n)\sim\mathcal{CN}(\boldsymbol{0},\boldsymbol{S})$.
We assume that $s_{0}(n)$ and $\boldsymbol{s}_{1}(n)$ are independent
over different symbols, $\forall n\in\{1,\dots,L\}$. Note that $\boldsymbol{S}$
is assumed to be of a general rank with $0\le m=\textrm{rank}(\boldsymbol{S})\le N$.
This corresponds to the case with $m$ linearly and statistically
independent sensing beams, each of which can be obtained via the \ac{evd}
of $\boldsymbol{S}$ \cite{wangbozhaoJ19}. As a result, the transmitted
signal by the \ac{bs}, $\boldsymbol{x}(n)\in\mathbb{C}^{N\times1}$,
is expressed as
\begin{equation}
\boldsymbol{x}(n)=\boldsymbol{w}_{0}s_{0}(n)+\boldsymbol{s}_{1}(n).
\end{equation}
Accordingly, the covariance matrix of $\boldsymbol{x}(n)$ is denoted
as
\begin{equation}
\boldsymbol{T}=\mathbb{E}(\boldsymbol{x}(n)\boldsymbol{x}^{H}(n))=\boldsymbol{W}+\boldsymbol{S},
\end{equation}
where $\boldsymbol{W}=\boldsymbol{w}_{0}\boldsymbol{w}_{0}^{H}$ with
$\boldsymbol{W}\succeq\boldsymbol{0}$ and $\textrm{rank}(\boldsymbol{W})\leq1$.
We consider that the BS is subject to the per-antenna transmit power
constraints, i.e., each antenna at the BS is subject to the maximum
transmit power budget $Q$. We define $\boldsymbol{E}_{i}\in\mathbb{C}^{N\times N}$
as a zero matrix except the $i$-th diagonal element being $1$. In
this case, we have
\begin{equation}
\mathrm{Tr}(\boldsymbol{E}_{i}\boldsymbol{T})=\mathrm{Tr}(\boldsymbol{E}_{i}(\boldsymbol{W}+\boldsymbol{S}))=Q,\forall i\in\{1,\dots,N\},\label{eq:power constraint}
\end{equation}
in which the full power transmission is adopted at each transmit antenna
of the BS to maximize the sensing performance. Similar per-antenna
full power transmission has been widely considered in the MIMO radar
literature \cite{LiStoJ07,StoPETLiJ07}.

\subsection{Secure Communication}

Then, we consider the secure communication model. We consider a quasi-static
channel model, in which the wireless channels remain static over the
time block of our interest, but may change from one block to another
\cite{LiuHuangNirJ20,SuLiuChrJ21}. Let $\boldsymbol{g}\in\mathbb{C}^{N\times1}$
denote the channel vector from the BS to the \ac{cu}. Accordingly,
the received signal at the \ac{cu} is expressed as\setlength{\abovedisplayskip}{1pt}
\begin{equation}
r_{0}(n)=\boldsymbol{g}^{H}\boldsymbol{w}_{0}s_{0}(n)+\boldsymbol{g}^{H}\boldsymbol{s}_{1}(n)+z_{0}(n),\label{eq:Received signal at CU}
\end{equation}
where $z_{0}(n)\sim\mathcal{CN}(0,\sigma_{0}^{2})$ denotes the additive
white Gaussian noise (AWGN) at the \ac{cu} receiver and $\sigma_{0}^{2}$
denotes the corresponding noise power. Based on (\ref{eq:Received signal at CU}),
the received \ac{sinr} at the \ac{cu} is\setlength{\abovedisplayskip}{1pt}
\begin{equation}
\gamma_{0}(\boldsymbol{W},\boldsymbol{S},\boldsymbol{g})=\frac{\boldsymbol{g}^{H}\boldsymbol{W}\boldsymbol{g}}{\boldsymbol{g}^{H}\boldsymbol{S}\boldsymbol{g}+\sigma_{0}^{2}}.\label{eq:SINR at CU}
\end{equation}

\setlength{\belowdisplayskip}{1pt}Furthermore, by letting $\boldsymbol{h}_{k}\in\mathbb{C}^{N\times1}$
denote the channel vector from the \ac{bs} to untrusted target
$k\in\mathcal{K}_{\mathrm{E}}$, the received signal at untrusted
target $k\in\mathcal{K}_{\mathrm{E}}$ is denoted as\setlength{\abovedisplayskip}{1pt}
\begin{equation}
r_{k}(n)=\boldsymbol{h}_{k}^{H}\boldsymbol{w}_{0}s_{0}(n)+\boldsymbol{h}_{k}^{H}\boldsymbol{s}_{1}(n)+z_{k}(n),\label{eq:Received signal at target}
\end{equation}
\setlength{\belowdisplayskip}{1pt}where $z_{k}(n)\sim\mathcal{CN}(0,\sigma_{k}^{2})$
denotes the AWGN at the receiver of untrusted target $k\in\mathcal{K}_{\mathrm{E}}$
and $\sigma_{k}^{2}$ denotes the corresponding noise power. Thus,
the \ac{sinr} at untrusted target $k\in\mathcal{K}_{\mathrm{E}}$
is 
\begin{equation}
\gamma_{k}(\boldsymbol{W},\boldsymbol{S},\boldsymbol{h}_{k})=\frac{\boldsymbol{h}_{k}^{H}\boldsymbol{W}\boldsymbol{h}_{k}}{\boldsymbol{h}_{k}^{H}\boldsymbol{S}\boldsymbol{h}_{k}+\sigma_{k}^{2}}.
\end{equation}

As such, under given $\boldsymbol{g}$ and $\{\boldsymbol{h}_{k}\}$,
the achievable secrecy rate at the \ac{cu} (in bits-per-second-per-Hertz,
bps/Hz) is given by \cite{GopPraJ08} 
\begin{equation}
R(\boldsymbol{W},\boldsymbol{S})=\underset{k\in\mathcal{K}_{\mathrm{E}}}{\min}\Big(\log_{2}\big(1+\gamma_{0}(\boldsymbol{W},\boldsymbol{S},\boldsymbol{g})\big)-\log_{2}\big(1+\gamma_{k}(\boldsymbol{W},\boldsymbol{S},\boldsymbol{h}_{k})\big)\Big)^{+}.\label{eq:rate}
\end{equation}

It is assumed that the BS has perfect CSI of the CU but imperfect
CSI of the eavesdroppers. Let $\hat{\boldsymbol{h}}_{k}$ denote the
estimated CSI of eavesdropping target $k\in\mathcal{K}_{\mathrm{E}}$
at the BS. The CSI estimation error vector of eavesdropping target
$k\in\mathcal{K}_{\mathrm{E}}$ is expressed as\setlength{\abovedisplayskip}{-1pt}
\begin{equation}
\boldsymbol{e}_{k}=\boldsymbol{h}_{k}-\hat{\boldsymbol{h}}_{k},\forall k\in\mathcal{K}_{\mathrm{E}}.\label{eq:error}
\end{equation}
It is assumed that the error vector $\boldsymbol{e}_{k}$ is independent
from $\hat{\boldsymbol{h}}_{k}$, $\forall k\in\mathcal{K}_{\mathrm{E}}$.
In particular, we consider two practical scenarios with bounded and
Gaussian CSI errors for $\{\boldsymbol{e}_{k}\}$, respectively.

\subsubsection{Bounded CSI Errors}

In the first scenario, the CSI of each eavesdropper is subject to
a bounded error, i.e.,\setlength{\abovedisplayskip}{-1pt} 
\begin{equation}
\|\boldsymbol{e}_{k}\|\leq\varepsilon_{k},\forall k\in\mathcal{K}_{\mathrm{E}},
\end{equation}
\setlength{\belowdisplayskip}{-1pt}where $\varepsilon_{k}$ denotes
the maximum threshold of the bounded CSI error. Under given $\boldsymbol{g}$
and distribution of $\{\boldsymbol{h}_{k}\}$, the \textit{worst-case
}secrecy rate \cite{huang2011robust} is employed as the secure communication
metric in this scenario, i.e., 
\begin{equation}
R_{\mathrm{w}}(\boldsymbol{W},\boldsymbol{S})=\underset{k\in\mathcal{K}_{\mathrm{E}}}{\min}\Big(\log_{2}\big(1+\gamma_{0}(\boldsymbol{W},\boldsymbol{S},\boldsymbol{g})\big)-\underset{\|\boldsymbol{e}_{k}\|\leq\varepsilon_{k}}{\max}\log_{2}\big(1+\gamma_{k}(\boldsymbol{W},\boldsymbol{S},\boldsymbol{h}_{k})\big)\Big)^{+}.\label{eqn:worst_case:secrecy}
\end{equation}

\subsubsection{Gaussian CSI Errors}

In the second scenario, the CSI of the eavesdroppers is subject to
Gaussian errors, i.e., $\boldsymbol{e}_{k}\sim\mathcal{CN}(\boldsymbol{0},\boldsymbol{C}_{k})$,
$\forall k\in\mathcal{K}_{\mathrm{E}}$, where $\boldsymbol{C}_{k}\in\mathbb{S}_{+}^{N}$
denotes the covariance matrix of CSI error for eavesdropping target
$k\in\mathcal{K}_{\mathrm{E}}$. For notational convenience, we express
(\ref{eq:error}) as 
\begin{equation}
\boldsymbol{h}_{k}=\hat{\boldsymbol{h}}_{k}+\boldsymbol{C}_{k}^{\frac{1}{2}}\boldsymbol{u}_{k},\forall k\in\mathcal{K}_{\mathrm{E}},\label{eq:Probabilistic error model}
\end{equation}
where $\boldsymbol{u}_{k}$ denotes independent CSCG random vectors
with mean zero and variance $\boldsymbol{I}$, i.e., $\boldsymbol{u}_{k}\sim\mathcal{CN}(\boldsymbol{0},\boldsymbol{I})$.
It is assumed that $\boldsymbol{u}_{k}$ is independent over different
$k\in\mathcal{K}_{\mathrm{E}}$. Thus, $\boldsymbol{h}_{k}$ is independent
of $\boldsymbol{h}_{l}$, for any $k,l\in\mathcal{K}_{\mathrm{E}},k\neq l$
\cite{li2013safe}.\textcolor{blue}{{} }With given $\boldsymbol{g}$
and the Gaussian CSI error model in (\ref{eq:Probabilistic error model}),
we adopt the secrecy outage probability as the secure communication
performance measure, which is defined as the probability of the achievable
secrecy rate in (\ref{eq:rate}) being lower than a given threshold
$R_{0}$, i.e., 
\begin{equation}
\mathrm{Pr}(R(\boldsymbol{W},\boldsymbol{S})<R_{0}).\label{eqn:outage:probability}
\end{equation}
In \eqref{eqn:outage:probability}, the probability is taken over
the randomness of $\{\boldsymbol{u}_{k}\}$ based on (\ref{eq:Probabilistic error model}). 

\subsection{Radar Sensing}

Next, we consider the monostatic radar sensing at the BS \cite{StoPETLiJ07,LiuHuangNirJ20}.
As the BS has the complete knowledge of the transmitted information
waveform, the information message $s_{0}(n)$ can be exploited together
with the dedicated sensing signal $\boldsymbol{s}_{1}(n)$ for target
sensing. We assume that the targets are in the far-field region, each
of which can be viewed as a point target with angle $\theta_{k}$,
$k\in\mathcal{K}$. In this case, supposing that the BS is equipped
with the same number of $N$ antennas for signal reception, the received
echo signal $\boldsymbol{y}(n)\in\mathbb{C}^{N\times1}$ at the BS
is denoted as \cite{StoPETLiJ07}\vspace{-2pt}
\begin{equation}
\boldsymbol{y}(n)=\sum_{k=1}^{K}\beta_{k}\boldsymbol{a}^{c}(\theta_{k})\boldsymbol{a}^{H}(\theta_{k})\boldsymbol{x}(n)+\boldsymbol{z}(n),\label{eq:reflected signal}
\end{equation}
where $\beta_{k}\in\mathbb{C}$ denotes the complex round-trip channel
coefficient of target $k\in\mathcal{K}$ depending on the associated
path loss and its radar cross section (RCS), $\boldsymbol{z}(n)\in\mathbb{C}^{N\times1}$
denotes the background noise at the BS receiver (including clutters
or interference), and $\boldsymbol{a}(\theta)$ denotes the steering
vector with an angle $\theta$, i.e.,\vspace{-2pt}
\begin{equation}
\begin{array}{c}
\boldsymbol{a}(\theta)=[1,e^{j2\pi\frac{d}{\lambda}\sin(\theta)},\ldots,e^{j2\pi(N-1)\frac{d}{\lambda}\sin(\theta)}]^{T}.\end{array}\label{eq:steer}
\end{equation}
In (\ref{eq:steer}), $\lambda$ denotes the carrier wavelength and
$d$ denotes the spacing between two adjacent antennas at the BS.

We consider the tracking task with the objective of estimating $\{\theta_{k}\}$
of the $K$ targets based on \eqref{eq:steer}, in which the BS \textit{a
priori} knows the number of targets $K$ and their initial angle estimation
$\{\hat{\theta_{k}}\}$ (e.g., based on the preceding detection) \cite{li2008mimo}.
To facilitate sensing design, for any sensing angle $\theta\in[-\frac{\pi}{2},\frac{\pi}{2}]$,
we define the beampattern gain $P(\theta)$ as the transmit signal
power distribution at that angle \cite{StoPETLiJ07,li2008mimo}, i.e.,\vspace{-2pt}
\begin{equation}
P(\theta)=\mathbb{E}\left(\big|\boldsymbol{a}^{H}(\theta)(\boldsymbol{s}_{1}+\boldsymbol{w}_{0}s_{0})\big|^{2}\right)=\boldsymbol{a}^{H}(\theta)(\boldsymbol{W}+\boldsymbol{S})\boldsymbol{a}(\theta).
\end{equation}
For any two sensing angle $\hat{\theta_{p}}$ and $\hat{\theta_{q}}$
($p,q\in\mathcal{K},p\neq q$), we define the cross-correlation pattern
as $\boldsymbol{a}^{H}(\hat{\theta_{p}})\boldsymbol{T}\boldsymbol{a}(\hat{\theta_{q}})$.

In order to optimize the sensing performance for estimation, the transmitted
beampattern should be designed to satisfy the following two requirements
\cite{StoPETLiJ07}. 
\begin{itemize}
\item First, we need to control the transmit beampattern $\{P(\theta)\}$
to mimic a desired transmit beampattern $\hat{P}(\theta)$ as follows
based on the estimated directions $\{\hat{\theta_{k}}\}$, \cite{LiuHuangNirJ20,SuLiuChrJ21,li2008mimo,LiStoJ07}\vspace{-2pt},
\begin{equation}
\hat{P}(\theta)=\begin{cases}
1 & \exists k\in\mathcal{K},|\theta-\hat{\theta_{k}}|<\frac{\Delta\theta}{2},\\
0 & \textrm{otherwise},
\end{cases}
\end{equation}
where $\Delta\theta$ denotes the width of desired beampattern at
each estimation angle. Towards this end, we adopt the beampattern
matching mean squared error (MSE), $B(\{\bar{\theta}_{m}\}_{m=1}^{M},\boldsymbol{S},\boldsymbol{W},\eta)$,
as the performance metric for quantifying the quality of beampattern
matching, which measures the difference between the actual transmit
beampattern $\{P(\theta)\}$ in the angular domain versus the desired
beampattern $\{\hat{P}(\theta)\}$, i.e.,\vspace{-2pt}
\begin{eqnarray}
B(\{\bar{\theta}_{m}\}_{m=1}^{M},\boldsymbol{S},\boldsymbol{W},\eta) & = & \frac{1}{M}\sum_{m=1}^{M}\big|\eta\hat{P}(\bar{\theta}_{m})-\boldsymbol{a}^{H}(\bar{\theta}_{m})(\boldsymbol{W}+\boldsymbol{S})\boldsymbol{a}(\bar{\theta}_{m})\big|^{2}.\label{eq:bme}
\end{eqnarray}
In (\ref{eq:bme}), $\{\bar{\theta}_{m}\}_{m=1}^{M}$ denote the $M$
sample angles over $[-\frac{\pi}{2},\frac{\pi}{2}]$ and $\eta$ is
a scaling factor to be optimized.
\item Next, we need to minimize the cross-correlation $|\boldsymbol{a}^{H}(\hat{\theta_{p}})\boldsymbol{T}\boldsymbol{a}(\hat{\theta_{q}})|$
at $\hat{\theta_{p}}$ and $\hat{\theta_{q}}$, for any $p,q\in\mathcal{K},p\neq q$
to resolve different target angles efficiently. Towards this end,
we use the mean squared value at different estimated target directions
$C(\{\hat{\theta_{k}}\},\boldsymbol{S},\boldsymbol{W})$ as the performance
metric \cite{li2008mimo,LiuHuangNirJ20}, i.e.,\vspace{-3pt}
\begin{equation}
C(\{\hat{\theta_{k}}\},\boldsymbol{S},\boldsymbol{W})=\frac{2}{K^{2}-K}\sum_{p=1}^{K}\sum_{q=p+1}^{K}\big|\boldsymbol{a}^{H}(\hat{\theta_{p}})(\boldsymbol{W}+\boldsymbol{S})\boldsymbol{a}(\hat{\theta_{q}})\big|^{2}.\label{eq:cross}
\end{equation}
\end{itemize}
To capture the importance of $B(\{\bar{\theta}_{m}\}_{m=1}^{M},\boldsymbol{S},\boldsymbol{W},\eta)$
in (\ref{eq:bme}) and $C(\{\hat{\theta_{k}}\},\boldsymbol{S},\boldsymbol{W})$
in (\ref{eq:cross}), we adopt the weighted sum of beampattern matching
MSE and cross-correlation pattern in the following as the sensing
performance metric to be minimized, similarly as that in \cite{StoPETLiJ07}:\vspace{-3pt}
\begin{eqnarray}
D(\boldsymbol{S},\boldsymbol{W},\eta) & = & \frac{1}{M}\sum_{m=1}^{M}\big|\eta\hat{P}(\bar{\theta}_{m})-\boldsymbol{a}^{H}(\bar{\theta}_{m})(\boldsymbol{W}+\boldsymbol{S})\boldsymbol{a}(\bar{\theta}_{m})\big|^{2}\nonumber \\
 &  & +\frac{2\omega_{C}}{K^{2}-K}\sum_{p=1}^{K}\sum_{q=p+1}^{K}\big|\boldsymbol{a}^{H}(\hat{\theta_{p}})(\boldsymbol{W}+\boldsymbol{S})\boldsymbol{a}(\hat{\theta_{q}})\big|^{2},\label{eq:beampattern design metric}
\end{eqnarray}
where $\omega_{C}\geq0$ is the weight of the cross-correlation term
that is pre-determined. As shown in \cite{StoPETLiJ07} for MIMO radar
sensing, by designing $\boldsymbol{W}$ and $\boldsymbol{S}$ to properly
optimize the weighted sum in (\ref{eq:beampattern design metric}),
the multi-target estimation performance can be efficiently optimized.\footnote{The multi-target estimation performance in secure ISAC of our interest
will be shown in Section V using the practical Capon estimation algorithm
\cite{xu2008target,StoPETLiJ07,li2008mimo}.}

\subsection{Problem Formulation}

Our objective is to minimize the weighted sum of beampattern matching
MSE and cross-correlation patterns $D(\boldsymbol{S},\boldsymbol{W},\eta)$
in (\ref{eq:beampattern design metric}), by jointly optimizing the
transmit information covariance matrix $\boldsymbol{W}$ and the sensing/AN
covariance matrix $\boldsymbol{S}$, subject to the secure communication
requirements under two scenarios with different CSI error models of
eavesdroppers.

First, we consider the scenario with the bounded CSI errors of eavesdroppers.
In this scenario, the worst-case secrecy rate in \eqref{eqn:outage:probability}
is adopted as the performance metric. The worst-case secrecy rate
constrained beampattern optimization problem is formulated as\vspace{-5pt}\begin{subequations}
\begin{eqnarray}
\textrm{(P1)}: & \underset{\boldsymbol{W},\boldsymbol{S},\eta}{\min} & D(\boldsymbol{W},\boldsymbol{S},\eta)\nonumber \\
 & \textrm{s.t.} & R_{\mathrm{w}}(\boldsymbol{W},\boldsymbol{S})\geq R_{0},\label{eq:worst-case secrecy rate constraint}\\
 &  & \mathrm{Tr}(\boldsymbol{E}_{i}(\boldsymbol{W}+\boldsymbol{S}))=Q,\forall i\in\{1,\dots,N\},\label{eq:uniform element power}\\
 &  & \boldsymbol{W}\succeq\boldsymbol{0},\boldsymbol{S}\succeq\boldsymbol{0},\label{eq:semidefinite constraint}\\
 &  & \textrm{rank}(\boldsymbol{W})\le1,\label{eq:rank constraint}
\end{eqnarray}
\end{subequations}where $R_{0}$ denotes the constant worst-case
secrecy rate threshold required at the CU. Notice that problem (P1)
is non-convex due to the non-convex worst-case secrecy rate constraint
(\ref{eq:worst-case secrecy rate constraint}) and the rank constraint
(\ref{eq:rank constraint}), and thus is difficult to solve. Despite
the difficulty, we will optimally solve problem (P1) in Section III.

Next, we consider the Gaussian CSI errors of eavesdroppers. In this
scenario, we need to ensure the secrecy outage probability $\mathrm{Pr}(R(\boldsymbol{W},\boldsymbol{S})<R_{0})$
in \eqref{eqn:worst_case:secrecy} not to exceed a certain threshold
$\rho$ to satisfy the secure communication performance. The secrecy
outage-constrained beampattern optimization problem is formulated
as\vspace{-5pt} 
\begin{subequations}
\begin{eqnarray}
\textrm{(P2):} & \underset{\boldsymbol{W},\boldsymbol{S},\eta}{\min} & D(\boldsymbol{W},\boldsymbol{S},\eta)\nonumber \\
 & \textrm{s.t.} & \textrm{Pr}(R(\boldsymbol{W},\boldsymbol{S})<R_{0})\leq\rho,\label{eq:Secrecy outage}\\
 &  & \mathrm{Tr}(\boldsymbol{E}_{i}(\boldsymbol{W}+\boldsymbol{S}))=Q,\forall i\in\{1,\dots,N\},\label{eq:Power constraint}\\
 &  & \boldsymbol{W}\succeq\boldsymbol{0},\boldsymbol{S}\succeq\boldsymbol{0},\label{eq:Semi definite condtraint}\\
 &  & \textrm{rank}(\boldsymbol{W})\le1,\label{eq:rank constraint-1}
\end{eqnarray}
\end{subequations}
where $0<\rho<0.5$ is the given outage threshold. In problem (P2),
the rank constraint in (\ref{eq:rank constraint-1}) is non-convex
and the secrecy outage constraint in (\ref{eq:Secrecy outage}) even
does not admit a closed-form expression. Therefore, problem (P2) is
more difficult to solve than (P1). We will propose an efficient algorithm
to solve problem (P2) in Section IV.
\begin{rem}
\label{thm:It-is-worth}It is worth emphasizing that for the special
case with perfect CSI of eavesdroppers, the simplified secrecy rate-constrained
beampattern optimization problem (i.e., (P1) with $\boldsymbol{e}_{k}=\boldsymbol{0}$
or $\varepsilon_{k}=0$, $\forall k\in\mathcal{K}_{\mathrm{E}}$ and
(P2) with $\boldsymbol{C}_{k}=\boldsymbol{0}$, $\forall k\in\mathcal{K}_{\mathrm{E}}$
and $\rho=0$) is novel and has not been addressed in the literature
yet, e.g., \cite{SuLiuChrJ21,DelAnaJ18}. As will be shown in Section
III, this simplified problem can be solved optimally based on the
optimal solution to (P1). Therefore, we will not address this subcase
separately. Instead, we will present numerical results for it in Section
V-A to obtain more insights. 
\end{rem}

\section{Optimal Joint Beamforming Solution to Problem (P1)}

This section presents the optimal joint transmit beamforming solution
to problem (P1). Towards this end, we first deal with the rank constraint
in (\ref{eq:rank constraint}) based on the idea of SDR. By dropping
the rank constraint in (\ref{eq:rank constraint}), we obtain a relaxed
version of (P1) as\vspace{-5pt}
\begin{eqnarray*}
\textrm{(P1.1)}: & \underset{\boldsymbol{W},\boldsymbol{S},\eta}{\min} & D(\boldsymbol{W},\boldsymbol{S},\eta)\\
 & \textrm{s.t.} & \textrm{(\textrm{\ref{eq:worst-case secrecy rate constraint}}), (\ref{eq:uniform element power}),\textrm{ and }(\textrm{\ref{eq:semidefinite constraint}})}.
\end{eqnarray*}
In the following subsections, we first obtain the optimal solution
to problem (P1.1) with general-rank information covariance $\boldsymbol{W}$.
Then, we propose a construction method to find an optimal rank-one
solution of $\boldsymbol{W}$ for (P1).

\subsection{Optimal Solution to (P1.1)}

In the following, we focus on solving problem (P1.1) with the non-convex
worst-case secrecy rate constraint (\ref{eq:worst-case secrecy rate constraint}).
First, we introduce an auxiliary optimization variable $\gamma$ to
represent the \ac{sinr} threshold at the \ac{cu} such that the
worst-case secrecy rate constraint (\ref{eq:worst-case secrecy rate constraint})
is equivalently reformulated as 
\begin{equation}
\frac{\boldsymbol{g}^{H}\boldsymbol{W}\boldsymbol{g}}{\boldsymbol{g}^{H}\boldsymbol{S}\boldsymbol{g}+\sigma_{0}^{2}}\geq\gamma,\underset{\|\boldsymbol{e}_{k}\|\leq\varepsilon_{k}}{\max}\frac{(\hat{\boldsymbol{h}}_{k}+\boldsymbol{e}_{k})^{H}\boldsymbol{W}(\hat{\boldsymbol{h}}_{k}+\boldsymbol{e}_{k})}{(\hat{\boldsymbol{h}}_{k}+\boldsymbol{e}_{k})^{H}\boldsymbol{S}(\hat{\boldsymbol{h}}_{k}+\boldsymbol{e}_{k})^{H}+\sigma_{k}^{2}}\leq\varphi(\gamma),\forall k\in\mathcal{K}_{\mathrm{E}},
\end{equation}
where $\varphi(\gamma)=2^{-R_{0}}(1+\gamma)-1$, and $(\cdot)^{+}$
in the worst-case secrecy rate formula in (\ref{eq:worst-case secrecy rate constraint})
is omitted without loss of optimality. Notice that in order for $\gamma$
to be feasible to (P1.1), it must hold that $\gamma\in\Gamma\triangleq\{\gamma|2^{R_{0}}-1\leq\gamma\leq NQ\|\boldsymbol{g}\|^{2}/\sigma_{0}^{2}\}$.
Therefore, 
problem (P1.1) is equivalently expressed as\vspace{-5pt}
\begin{subequations}
\begin{eqnarray}
\textrm{(P1.2)}: & \underset{\boldsymbol{W},\boldsymbol{S},\eta,\mathscr{\gamma\in\varGamma}}{\min} & D(\boldsymbol{W},\boldsymbol{S},\eta)\nonumber \\
 & \textrm{s.t.} & \boldsymbol{g}^{H}\boldsymbol{W}\boldsymbol{g}\geq\gamma(\boldsymbol{g}^{H}\boldsymbol{S}\boldsymbol{g}+\sigma_{0}^{2}),\label{eq:CU constraint}\\
 &  & \boldsymbol{e}_{k}^{H}\big(\boldsymbol{W}-\varphi(\gamma)\boldsymbol{S}\big)\boldsymbol{e}_{k}+2\textrm{Re}\Big(\boldsymbol{e}_{k}^{H}\big(\boldsymbol{W}-\varphi(\gamma)\boldsymbol{S}\big)\hat{\boldsymbol{h}}_{k}\Big)\nonumber \\
 &  & +\hat{\boldsymbol{h}}_{k}^{H}\big(\boldsymbol{W}-\varphi(\gamma)\boldsymbol{S}\big)\hat{\boldsymbol{h}}_{k}-\varphi(\gamma)\sigma_{k}^{2}\leq0,\forall\|\boldsymbol{e}_{k}\|\leq\varepsilon_{k},\forall k\in\mathcal{K}_{\mathrm{E}},\label{eq:target constraint}\\
 &  & \textrm{(\ref{eq:uniform element power}) and (\textrm{\ref{eq:semidefinite constraint}})}.\nonumber 
\end{eqnarray}
\end{subequations}
\setlength{\belowdisplayskip}{3pt}Then, we introduce the S-procedure
in the following lemma \cite{WangAntJ14} to handle constraint (\ref{eq:target constraint}). 
\begin{lem}
\label{lem:-S-procedure:-Let-1}Suppose that $f_{i}(\boldsymbol{v})=\boldsymbol{v}^{H}\boldsymbol{M}_{i}\boldsymbol{v}+2\mathcal{R}e\left\{ \boldsymbol{v}^{H}\boldsymbol{b}_{i}\right\} +n_{i},\boldsymbol{M}_{i}\in\mathbb{S}^{N},\boldsymbol{b}_{i}\in\mathbb{C}^{N\times1},i=1,2$,
where there exists a point $\hat{\boldsymbol{v}}$ such that $f_{1}(\hat{\boldsymbol{v}})<0$.
Then, it follows that $f_{1}(\boldsymbol{v})\leq0\Longrightarrow f_{2}(\boldsymbol{v})\leq0$
if and only if there exists $\lambda\geq0$ such that 
\begin{equation}
\lambda\left[\begin{array}{cc}
\boldsymbol{M}_{1} & \boldsymbol{b}_{1}\\
\boldsymbol{b}_{1}^{H} & n_{1}
\end{array}\right]-\left[\begin{array}{cc}
\boldsymbol{M}_{2} & \boldsymbol{b}_{2}\\
\boldsymbol{b}_{2}^{H} & n_{2}
\end{array}\right]\succeq\boldsymbol{0}.
\end{equation}
\end{lem}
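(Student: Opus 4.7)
The plan is to prove the two directions of Lemma~\ref{lem:-S-procedure:-Let-1} separately. The sufficient direction is a one-line substitution, whereas the necessary direction will require an SDP-duality argument combined with a rank-one tightness result for complex Hermitian QCQPs.

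For the sufficient direction, write $\tilde{\boldsymbol{M}}_i = \begin{bmatrix} \boldsymbol{M}_i & \boldsymbol{b}_i \\ \boldsymbol{b}_i^H & n_i \end{bmatrix}$ so that $f_i(\boldsymbol{v}) = [\boldsymbol{v}^H,\,1]\,\tilde{\boldsymbol{M}}_i\,[\boldsymbol{v}^H,\,1]^H$. Assuming the LMI $\lambda \tilde{\boldsymbol{M}}_1 - \tilde{\boldsymbol{M}}_2 \succeq \boldsymbol{0}$ holds, I would sandwich it between $[\boldsymbol{v}^H,\,1]$ and its conjugate transpose to obtain $\lambda f_1(\boldsymbol{v}) - f_2(\boldsymbol{v}) \geq 0$; then for any $\boldsymbol{v}$ with $f_1(\boldsymbol{v}) \leq 0$ this immediately yields $f_2(\boldsymbol{v}) \leq \lambda f_1(\boldsymbol{v}) \leq 0$.

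For the necessary direction, I would recast the implication ``$f_1(\boldsymbol{v}) \leq 0 \Rightarrow f_2(\boldsymbol{v}) \leq 0$'' as the statement that the maximum of $f_2(\boldsymbol{v})$ over $\{f_1(\boldsymbol{v}) \leq 0\}$ is nonpositive. Homogenizing via $\boldsymbol{Y} = [\boldsymbol{v}^H,\,1]^H [\boldsymbol{v}^H,\,1]$ lifts this to the SDP
\begin{equation*}
\max_{\boldsymbol{Y} \succeq \boldsymbol{0}} \textrm{Tr}(\tilde{\boldsymbol{M}}_2 \boldsymbol{Y}) \ \ \textrm{s.t.} \ \ \textrm{Tr}(\tilde{\boldsymbol{M}}_1 \boldsymbol{Y}) \leq 0, \ Y_{N+1,N+1} = 1, \ \textrm{rank}(\boldsymbol{Y}) = 1,
\end{equation*}
and relaxing the rank constraint. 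The Slater point $\hat{\boldsymbol{v}}$ with $f_1(\hat{\boldsymbol{v}}) < 0$ supplies a strictly feasible $\hat{\boldsymbol{Y}} = [\hat{\boldsymbol{v}}^H,\,1]^H [\hat{\boldsymbol{v}}^H,\,1]$ for the relaxation, so strong duality holds. Writing the Lagrangian with multiplier $\lambda \geq 0$ for the $\textrm{Tr}(\tilde{\boldsymbol{M}}_1\boldsymbol{Y}) \leq 0$ constraint and $\mu \in \mathbb{R}$ for the normalization, the dual feasibility condition is precisely $\lambda \tilde{\boldsymbol{M}}_1 - \tilde{\boldsymbol{M}}_2 + \mu \boldsymbol{J} \succeq \boldsymbol{0}$ with $\boldsymbol{J}$ having a single nonzero entry in the bottom-right, which after absorbing $\mu$ into $n_1, n_2$ and using weak duality delivers the required LMI.

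The main obstacle is closing the loop between the relaxed SDP and the original nonconvex problem: I must argue that the optimal value of the SDR equals that of the rank-one constrained problem, so that ``SDR optimum $\leq 0$'' is equivalent to the hypothesis on $f_1, f_2$. For complex-valued Hermitian QCQPs with only two linear constraints, this tightness is standard and can be established via the rank-reduction result for SDPs with a bounded number of constraints (e.g., Shapiro's theorem, or the construction in Huang--Palomar), which guarantees a rank-one optimizer $\boldsymbol{Y}^\star$ and hence a matching primal $\boldsymbol{v}^\star$. An alternative route that avoids the rank-reduction machinery is to invoke the complex version of Yakubovich's lossless S-lemma directly; either way, once tightness is in hand, Slater plus strong duality immediately extract the nonnegative $\lambda$ certifying the LMI.
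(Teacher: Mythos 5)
The paper does not prove this lemma at all: it is the classical (complex, inhomogeneous) S-procedure, quoted verbatim from the cited reference, so there is no in-paper argument to compare against. Your proposal is the standard modern proof of that result and is essentially sound. The sufficiency direction is exactly right: sandwiching $\lambda\tilde{\boldsymbol{M}}_1-\tilde{\boldsymbol{M}}_2\succeq\boldsymbol{0}$ with $[\boldsymbol{v}^H,\,1]$ gives $f_2(\boldsymbol{v})\le\lambda f_1(\boldsymbol{v})\le 0$, and note this direction needs neither $\lambda$'s provenance nor the Slater point. The necessity direction via homogenization, SDR, Slater, and dual attainment is the right skeleton, and your bookkeeping of the multiplier $\mu$ for the normalization constraint is correct (one gets $\lambda\tilde{\boldsymbol{M}}_1+\mu\boldsymbol{J}-\tilde{\boldsymbol{M}}_2\succeq\boldsymbol{0}$ with $\mu\le 0$, and $-\mu\boldsymbol{J}\succeq\boldsymbol{0}$ absorbs into the LMI). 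Two details deserve care. First, the rank-reduction step as you state it presupposes that the relaxed SDP attains its maximum; to cover the case where it is unbounded or unattained, the cleaner move is to apply the rank bound $r^2\le m$ to the feasibility system obtained by adding the level constraint $\mathrm{Tr}(\tilde{\boldsymbol{M}}_2\boldsymbol{Y})\ge t$ (three constraints, complex case, hence $r\le 1$), which shows the rank-one supremum equals the relaxed supremum for every achievable level $t$; also recall that $Y_{N+1,N+1}=1$ lets you phase-normalize the rank-one factor so that a genuine $\boldsymbol{v}$ is recovered. Second, your ``alternative route'' of invoking the complex lossless S-lemma directly is circular, since that is precisely the statement being proven; drop it or replace it with the joint-numerical-range convexity argument, which is the genuinely independent classical path.
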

Notice that constraint (\ref{eq:target constraint}) is equivalent
to 
\begin{eqnarray}
\boldsymbol{e}_{k}^{H}\boldsymbol{e}_{k}-\varepsilon_{k}^{2}\leq0 & \Rightarrow & \boldsymbol{e}_{k}^{H}\big(\boldsymbol{W}-\varphi(\gamma)\boldsymbol{S}\big)\boldsymbol{e}_{k}+2\textrm{Re}\Big(\boldsymbol{e}_{k}^{H}\big(\boldsymbol{W}-\varphi(\gamma)\boldsymbol{S}\big)\hat{\boldsymbol{h}}_{k}\Big)\nonumber \\
 &  & +\hat{\boldsymbol{h}}_{k}^{H}\big(\boldsymbol{W}-\varphi(\gamma)\boldsymbol{S}\big)\hat{\boldsymbol{h}}_{k}-\varphi(\gamma)\sigma_{k}^{2}\leq0,\forall k\in\mathcal{K}_{\mathrm{E}}.\label{eq:targae constraint 2}
\end{eqnarray}
Therefore, based on Lemma \ref{lem:-S-procedure:-Let-1}, constraint
(\ref{eq:targae constraint 2}) or equivalent (\ref{eq:target constraint})
is equivalently rewritten as 
\begin{eqnarray}
\hspace*{-0.3cm}\left[\begin{array}{cc}
\lambda_{k}\boldsymbol{I}-\big(\boldsymbol{W}-\varphi(\gamma)\boldsymbol{S}\big) & -\big(\boldsymbol{W}-\varphi(\gamma)\boldsymbol{S}\big)\hat{\boldsymbol{h}}_{k}\\
-\hat{\boldsymbol{h}}_{k}^{H}\big(\boldsymbol{W}-\varphi(\gamma)\boldsymbol{S}\big) & -\lambda_{k}\varepsilon_{k}^{2}-\hat{\boldsymbol{h}}_{k}^{H}\big(\boldsymbol{W}-\varphi(\gamma)\boldsymbol{S}\big)\hat{\boldsymbol{h}}_{k}+\varphi(\gamma)\sigma_{k}^{2}
\end{array}\right]\hspace{-0.1cm}\succeq\hspace{-0.1cm}\boldsymbol{0},\lambda_{k}\geq0,\hspace*{-0.1cm}\forall k\in\mathcal{K}_{\mathrm{E}},\label{eq:LMI constriant}
\end{eqnarray}
where $\{\lambda_{k}\}$ are the new auxiliary optimization variables.
Hence, problem (P1.2) can be equivalently reformulated as\setlength{\abovedisplayskip}{3pt}
\begin{subequations}
\begin{eqnarray*}
\textrm{(P1.3)}: & \underset{\boldsymbol{W},\boldsymbol{S},\eta,\gamma\in\varGamma,\{\lambda_{k}\}}{\min} & D(\boldsymbol{W},\boldsymbol{S},\eta)\\
 & \textrm{s.t.} & \textrm{(\ref{eq:uniform element power}), (\ref{eq:semidefinite constraint}), (\ref{eq:CU constraint}), and (\ref{eq:LMI constriant}).}
\end{eqnarray*}
\end{subequations}
\setlength{\belowdisplayskip}{3pt}Problem (P1.3) is still non-convex
due to the non-convex constraints in (\ref{eq:CU constraint}) and
(\ref{eq:LMI constriant}). To facilitate the development of solution
to (P1.3), we consider the optimization of $\boldsymbol{W},\boldsymbol{S},\eta$,
and $\{\lambda_{k}\}$ in problem (P1.3) under any given $\gamma\in\varGamma$
as\setlength{\abovedisplayskip}{3pt}
\begin{subequations}
\begin{eqnarray*}
\textrm{(P1.4)}: & \underset{\boldsymbol{W},\boldsymbol{S},\eta,\{\lambda_{k}\}}{\min} & D(\boldsymbol{W},\boldsymbol{S},\eta)\\
 & \textrm{s.t.} & \textrm{(\ref{eq:uniform element power}), (\textrm{\ref{eq:semidefinite constraint}}), (\ref{eq:CU constraint}), and (\ref{eq:LMI constriant}).}
\end{eqnarray*}
\end{subequations}
\setlength{\belowdisplayskip}{3pt}It is observed that with given
$\gamma$, (\ref{eq:CU constraint}) is an affine constraint and (\ref{eq:LMI constriant})
is a \ac{lmi} constraint, both of which are convex. Therefore,
problem (P1.4) is a convex \ac{qsdp} problem that can be optimally
solvable via off-the-shelf convex programming numerical solvers such
as CVX \cite{cvx}.

Let $f(\gamma)$ denote the optimal objective value achieved by problem
(P1.4) with given $\gamma$. Then we solve problem (P1.3) by first
solving problem (P1.4) under any given $\gamma$ and then searching
over $\gamma\in\varGamma$ via a \ac{1d} search in the following
problem:
\begin{eqnarray*}
(\textrm{P1.5}): & \underset{_{\gamma\in\varGamma}}{\min} & f(\gamma).
\end{eqnarray*}
Therefore, by optimally solving problem (P1.4) together with the 1D
search for solving problem (P1.5), the optimal solution to problem
(P1.3) is obtained, which is denoted as $\tilde{\boldsymbol{W}}^{*}$,
$\tilde{\boldsymbol{S}}^{*}$, $\tilde{\gamma}^{*}$, $\tilde{\eta}^{*}$,
and $\{\tilde{\lambda}^{*}\}$. Accordingly, $\tilde{\boldsymbol{W}}^{*}$,
$\tilde{\boldsymbol{S}}^{*}$, and $\tilde{\eta}^{*}$ are the optimal
solution to problem (P1.1). Notice that $\textrm{rank}(\tilde{\boldsymbol{W}}^{*})\leq1$
may not hold in general. Therefore, the obtained $\tilde{\boldsymbol{W}}^{*}$,
$\tilde{\boldsymbol{S}}^{*}$, $\textrm{ and }\tilde{\eta}^{*}$ from
the numerical solvers may not be always feasible for problem (P1).

\subsection{Construction of Rank-one Solution of $\boldsymbol{W}$ for (P1)}

To address the aforementioned issue, we propose a novel approach to
construct a rank-one solution of $\boldsymbol{W}$ to problem (P1)
if $\textrm{rank}(\tilde{\boldsymbol{W}}^{*})>1$. In the following
proposition, we construct an optimal solution to problem (P1.3) satisfying
the rank-one constraint in (\ref{eq:rank constraint}). 
\begin{prop}
\textup{\label{prop:1}Based on the obtained optimal solution $\tilde{\boldsymbol{W}}^{*}$,
$\tilde{\boldsymbol{S}}^{*}$, $\tilde{\gamma}^{*}$, $\tilde{\eta}^{*}$,
and $\{\tilde{\lambda_{k}}^{*}\}$ to problem (P1.3), we can always
construct an equivalent solution of $\boldsymbol{W}^{*}$, $\boldsymbol{S}^{*}$,
$\gamma^{*}$, $\eta^{*}$, and $\{\lambda_{k}^{*}\}$ in the following,
which is also optimal to problem (P1.3) with $\textrm{rank}(\boldsymbol{W}^{*})=1$.
\begin{subequations}
\begin{eqnarray}
 &  & \boldsymbol{W}^{*}=\frac{\tilde{\boldsymbol{W}}^{*}\boldsymbol{g}\boldsymbol{g}^{H}\tilde{\boldsymbol{W}}^{*}}{\boldsymbol{g}^{H}\tilde{\boldsymbol{W}}^{*}\boldsymbol{g}},\label{eq:construction W}\\
 &  & \boldsymbol{S}^{*}=\tilde{\boldsymbol{W}}^{*}+\tilde{\boldsymbol{S}}^{*}-\boldsymbol{W}^{*},\label{eq:construction S}\\
 &  & \eta^{*}=\tilde{\eta}^{*},\gamma^{*}=\tilde{\gamma}^{*},\lambda_{k}^{*}=\tilde{\lambda_{k}}^{*},\forall k\in\mathcal{K}_{\mathrm{E}}.\label{eq:construction others}
\end{eqnarray}
\end{subequations}
} 
\end{prop}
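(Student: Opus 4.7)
The plan is to verify that the constructed tuple $(\boldsymbol{W}^*,\boldsymbol{S}^*,\gamma^*,\eta^*,\{\lambda_k^*\})$ preserves every constraint of (P1.3) while attaining the same objective value, and additionally satisfies the rank-one requirement. The central observation I would exploit throughout is that $\boldsymbol{W}^*+\boldsymbol{S}^*=\tilde{\boldsymbol{W}}^*+\tilde{\boldsymbol{S}}^*$ by the very definition in (\ref{eq:construction S}). Since the objective $D(\boldsymbol{W},\boldsymbol{S},\eta)$ depends on $\boldsymbol{W}$ and $\boldsymbol{S}$ only through the sum $\boldsymbol{W}+\boldsymbol{S}$, and the per-antenna power constraint (\ref{eq:uniform element power}) likewise depends only on that sum, both the objective value and (\ref{eq:uniform element power}) transfer immediately from the tilde solution to the new one. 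The rank-one property of $\boldsymbol{W}^*$ is obvious from (\ref{eq:construction W}), since it is a nonnegative multiple of $(\tilde{\boldsymbol{W}}^*\boldsymbol{g})(\tilde{\boldsymbol{W}}^*\boldsymbol{g})^H$.

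The next step is to establish $\boldsymbol{W}^*\succeq\boldsymbol{0}$ and $\boldsymbol{S}^*\succeq\boldsymbol{0}$. The former is clear. For the latter, I would factor $\tilde{\boldsymbol{W}}^*=\tilde{\boldsymbol{W}}^{*1/2}\tilde{\boldsymbol{W}}^{*1/2}$ and write
\begin{equation*}
\tilde{\boldsymbol{W}}^*-\boldsymbol{W}^*=\tilde{\boldsymbol{W}}^{*1/2}\Bigl(\boldsymbol{I}-\tfrac{\tilde{\boldsymbol{W}}^{*1/2}\boldsymbol{g}\boldsymbol{g}^H\tilde{\boldsymbol{W}}^{*1/2}}{\boldsymbol{g}^H\tilde{\boldsymbol{W}}^*\boldsymbol{g}}\Bigr)\tilde{\boldsymbol{W}}^{*1/2}\succeq\boldsymbol{0},
\end{equation*}
since the bracketed term is $\boldsymbol{I}$ minus a rank-one orthogonal projector. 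Hence $\boldsymbol{S}^*=\tilde{\boldsymbol{S}}^*+(\tilde{\boldsymbol{W}}^*-\boldsymbol{W}^*)\succeq\boldsymbol{0}$. This key inequality $\tilde{\boldsymbol{W}}^*-\boldsymbol{W}^*\succeq\boldsymbol{0}$ will be reused later and is the real workhorse of the construction.

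Then I would check (\ref{eq:CU constraint}). A direct calculation gives $\boldsymbol{g}^H\boldsymbol{W}^*\boldsymbol{g}=\boldsymbol{g}^H\tilde{\boldsymbol{W}}^*\boldsymbol{g}$, and substituting $\boldsymbol{S}^*=\tilde{\boldsymbol{W}}^*+\tilde{\boldsymbol{S}}^*-\boldsymbol{W}^*$ yields $\boldsymbol{g}^H\boldsymbol{S}^*\boldsymbol{g}=\boldsymbol{g}^H\tilde{\boldsymbol{S}}^*\boldsymbol{g}$, so (\ref{eq:CU constraint}) at $(\boldsymbol{W}^*,\boldsymbol{S}^*,\gamma^*)$ reduces to the same inequality already satisfied by $(\tilde{\boldsymbol{W}}^*,\tilde{\boldsymbol{S}}^*,\tilde{\gamma}^*)$. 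Thus the CU SINR constraint remains tight or satisfied exactly as before.

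The main obstacle — and the step that needs the most care — is the LMI (\ref{eq:LMI constriant}). Let $\tilde{A}\triangleq\tilde{\boldsymbol{W}}^*-\varphi(\gamma^*)\tilde{\boldsymbol{S}}^*$ and $A^*\triangleq\boldsymbol{W}^*-\varphi(\gamma^*)\boldsymbol{S}^*$. A short calculation using (\ref{eq:construction S}) gives
\begin{equation*}
\tilde{A}-A^*=(1+\varphi(\gamma^*))(\tilde{\boldsymbol{W}}^*-\boldsymbol{W}^*)\succeq\boldsymbol{0},
\end{equation*}
because $1+\varphi(\gamma^*)=2^{-R_0}(1+\gamma^*)>0$ and $\tilde{\boldsymbol{W}}^*-\boldsymbol{W}^*\succeq\boldsymbol{0}$ was shown above. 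Denote the LMI matrix on the left of (\ref{eq:LMI constriant}) as $L(A)$. Then a block computation shows
\begin{equation*}
L(A^*)-L(\tilde{A})=\begin{bmatrix}\boldsymbol{I}\\ \hat{\boldsymbol{h}}_k^H\end{bmatrix}(\tilde{A}-A^*)\begin{bmatrix}\boldsymbol{I} & \hat{\boldsymbol{h}}_k\end{bmatrix}\succeq\boldsymbol{0}.
\end{equation*}
Since $L(\tilde{A})\succeq\boldsymbol{0}$ by the feasibility of the tilde solution and the $\lambda_k^*=\tilde{\lambda}_k^*$ assignment, we conclude $L(A^*)\succeq\boldsymbol{0}$, so (\ref{eq:LMI constriant}) holds for the constructed solution. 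Combining all of the above verifications, $(\boldsymbol{W}^*,\boldsymbol{S}^*,\gamma^*,\eta^*,\{\lambda_k^*\})$ is feasible for (P1.3), achieves the same objective value as the original optimum, and has $\mathrm{rank}(\boldsymbol{W}^*)=1$, proving the proposition.
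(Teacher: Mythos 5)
Your proof is correct and follows the same overall structure as the paper's: same construction, same reduction of the objective and power constraints to the invariance of $\boldsymbol{W}+\boldsymbol{S}$, same key inequality $\tilde{\boldsymbol{W}}^{*}-\boldsymbol{W}^{*}\succeq\boldsymbol{0}$ (you prove it via the rank-one orthogonal projector $\tilde{\boldsymbol{W}}^{*1/2}\boldsymbol{g}\boldsymbol{g}^{H}\tilde{\boldsymbol{W}}^{*1/2}/(\boldsymbol{g}^{H}\tilde{\boldsymbol{W}}^{*}\boldsymbol{g})$, the paper via Cauchy--Schwarz on $\tilde{\boldsymbol{W}}^{*}=\boldsymbol{F}\boldsymbol{F}^{H}$; these are the same fact). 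The one place you genuinely diverge is the last step of the LMI verification. The paper computes the same difference matrix but then invokes a block-matrix characterization of positive semi-definiteness (Theorem \ref{thm:positive matrix}, the contraction criterion from Horn and Johnson) and explicitly builds a unit-norm contraction to certify it. You instead observe that the difference factors as the congruence $\bigl[\begin{smallmatrix}\boldsymbol{I}\\ \hat{\boldsymbol{h}}_{k}^{H}\end{smallmatrix}\bigr](\tilde{A}-A^{*})\bigl[\begin{smallmatrix}\boldsymbol{I} & \hat{\boldsymbol{h}}_{k}\end{smallmatrix}\bigr]$, which is positive semi-definite whenever $\tilde{A}-A^{*}$ is --- an immediate, elementary argument that sidesteps the contraction machinery entirely. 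Your simplification $\tilde{A}-A^{*}=(1+\varphi(\gamma^{*}))(\tilde{\boldsymbol{W}}^{*}-\boldsymbol{W}^{*})$, using $\Delta\boldsymbol{S}=-\Delta\boldsymbol{W}$, is also cleaner than the paper's separate sign bookkeeping of $\Delta\boldsymbol{W}\succeq\boldsymbol{0}$ and $\Delta\boldsymbol{S}\preceq\boldsymbol{0}$ (and it does not even need $\varphi(\gamma^{*})\geq 0$, only $1+\varphi(\gamma^{*})>0$). In short: same proof skeleton, but your LMI step is shorter and more self-contained.
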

\begin{proof} Please refer to Appendix A. \end{proof} Proposition
\ref{prop:1} shows the existence of the optimal rank-one information
covariance matrix to problem (P1.3) and thus (P1), which validates
that the adopted SDR of problem (P1) is tight. Therefore, the constructed
solution $\boldsymbol{W}^{*}$, $\boldsymbol{S}^{*}$, and $\eta^{*}$
is optimal to problem (P1.1) with $\textrm{rank}(\boldsymbol{W}^{*})=1$
thus is also optimal to problem (P1).

\section{Proposed Joint Beamforming Solution to Problem (P2)}

In this section, we propose an efficient solution to the secrecy outage
constrained sensing beampattern optimization problem (P2). Notice
that the secrecy outage constraint in (\ref{eq:Secrecy outage}) is
equivalent to 
\begin{equation}
\textrm{Pr}(R(\boldsymbol{W},\boldsymbol{S})\geq R_{0})\geq1-\rho.\label{eq:secrecy non-outage}
\end{equation}
By dropping the rank constraint in (\ref{eq:rank constraint-1}) similarly
as for problem (P1) and replacing (\ref{eq:Secrecy outage}) as (\ref{eq:secrecy non-outage}),
we relax problem (P2) as
\begin{subequations}
\begin{eqnarray*}
\textrm{(P2.1)}: & \underset{\boldsymbol{W},\boldsymbol{S},\eta}{\min} & D(\boldsymbol{W},\boldsymbol{S},\eta)\\
 & \textrm{s.t.} & \textrm{ (\ref{eq:Power constraint}), (\textrm{\ref{eq:Semi definite condtraint}})\textrm{, and }(\textrm{\ref{eq:secrecy non-outage}})}.
\end{eqnarray*}
\end{subequations}
By introducing an auxiliary optimization variable $\gamma$, the secrecy
outage constraint in (\ref{eq:secrecy non-outage}) is equivalently
reformulated as 
\begin{eqnarray}
 &  & \frac{\boldsymbol{g}^{H}\boldsymbol{W}\boldsymbol{g}}{\boldsymbol{g}^{H}\boldsymbol{S}\boldsymbol{g}+\sigma_{0}^{2}}\geq\gamma,\label{eq:SINR for CU}\\
 &  & \textrm{Pr}\Big(\log_{2}(1+\gamma)-\underset{k\in\mathcal{K}_{\mathrm{E}}}{\max}\log_{2}\big(1+\frac{\boldsymbol{h}_{k}^{H}\boldsymbol{W}\boldsymbol{h}_{k}}{\boldsymbol{h}_{k}^{H}\boldsymbol{S}\boldsymbol{h}_{k}+\sigma_{k}^{2}}\big)\geq R_{0}\Big)\geq1-\rho.\label{eq:new secrecy outage}
\end{eqnarray}
We denote the feasible region of $\gamma$ as $\varGamma=\{\gamma|2^{R_{0}}-1\leq\gamma\leq NQ\|\boldsymbol{g}\|^{2}/\sigma_{0}^{2}\}$.
Thus, problem (P2.1) is equivalently rewritten as
\begin{subequations}
\begin{eqnarray*}
\textrm{(P2.2)}: & \underset{\boldsymbol{W},\boldsymbol{S},\eta,\gamma\in\varGamma}{\min} & D(\boldsymbol{W},\boldsymbol{S},\eta)\\
 & \textrm{s.t.} & \textrm{(\ref{eq:Power constraint}), (\textrm{\ref{eq:Semi definite condtraint}}), (\textrm{\ref{eq:SINR for CU}}), \textrm{and }(\ref{eq:new secrecy outage})}.
\end{eqnarray*}
\end{subequations}

To solve problem (P2.2), we consider the optimization of $\boldsymbol{W},\boldsymbol{S},\textrm{ and }\eta$
under given $\gamma$, which is given by 
\begin{eqnarray*}
\textrm{(P2.3):} & \underset{\boldsymbol{W},\boldsymbol{S},\eta}{\min} & D(\boldsymbol{W},\boldsymbol{S},\eta)\\
 & \textrm{s.t.} & \textrm{(\ref{eq:Power constraint}), (\textrm{\ref{eq:Semi definite condtraint}}), (\textrm{\ref{eq:SINR for CU}}), \textrm{and }(\ref{eq:new secrecy outage})}.
\end{eqnarray*}
We denote the obtained objective value for problem (P2.3) as $f_{2}(\gamma)$.
As a result, we solve problem (P2.2) equivalently by first solving
(P2.3) under any given $\gamma\in\mathscr{\varGamma}$, and then find
the optimal $\gamma$ that minimizes $f_{2}(\gamma)$ over $\mathscr{\varGamma}$
via a 1D search.

Now, we only need to focus on solving problem (P2.3) with given $\gamma\in\varGamma$.
In the following, we first propose a safe decomposition for constraint
(\ref{eq:new secrecy outage}), then exploit the convex restriction
technique of \ac{bti} to establish restricted surrogate convex
problems for solving (P2.3), and finally construct a high-quality
solution to the original problem (P2) based on the solution to (P2.3).

\subsection{Safe Decomposition for Constraint\textmd{\normalsize{} $\text{(\ref{eq:new secrecy outage})}$}}

To solve problem (P2.3), we first deal with the secrecy outage constraint
(\ref{eq:new secrecy outage}). Recall that $\boldsymbol{h}_{k}$'s
are the channel vectors of different eavesdroppers that are assumed
to be independent. Therefore, we have 
\begin{equation}
\begin{array}{cl}
 & \textrm{Pr}\Big(\log_{2}(1+\gamma)-\underset{k\in\mathcal{K}_{\mathrm{E}}}{\max}\log_{2}\big(1+\frac{\boldsymbol{h}_{k}^{H}\boldsymbol{W}\boldsymbol{h}_{k}}{\boldsymbol{h}_{k}^{H}\boldsymbol{S}\boldsymbol{h}_{k}+\sigma_{k}^{2}}\big)\geq R_{0}\Big)\\
= & \prod_{k\in\mathcal{K}_{\mathrm{E}}}\textrm{Pr}\Big(\log_{2}(1+\gamma)-\log_{2}\big(1+\frac{\boldsymbol{h}_{k}^{H}\boldsymbol{W}\boldsymbol{h}_{k}}{\boldsymbol{h}_{k}^{H}\boldsymbol{S}\boldsymbol{h}_{k}+\sigma_{k}^{2}}\big)\geq R_{0}\Big).
\end{array}\label{eq:independent hks}
\end{equation}
Based on (\ref{eq:independent hks}), it follows that constraint (\ref{eq:new secrecy outage})
is equivalent to 
\begin{eqnarray}
 & \prod_{k\in\mathcal{K}_{\mathrm{E}}}\textrm{Pr}\Big(\frac{\boldsymbol{h}_{k}^{H}\boldsymbol{W}\boldsymbol{h}_{k}}{\boldsymbol{h}_{k}^{H}\boldsymbol{S}\boldsymbol{h}_{k}+\sigma_{k}^{2}}\leq\varphi(\gamma)\Big)\geq1-\rho,\label{eq:Product of SINR at targets}
\end{eqnarray}
where $\varphi(\gamma)=2^{-R_{0}}(\gamma+1)-1$. However, it is still
difficult to express constraint (\ref{eq:Product of SINR at targets})
into a tractable form. To resolve this issue, we replace (\ref{eq:Product of SINR at targets})
by the following $K_{\mathrm{E}}$ restricted constraints\footnote{Note that the equivalence between (\ref{eq:new secrecy outage}) and
(\ref{eq:Product of SINR at targets}) holds based on the assumption
that $\boldsymbol{h}_{k}$'s are independent and the restriction from
(\ref{eq:Product of SINR at targets}) to (\ref{eq:probability approxiamation})
does not depend on the independence of $\{\boldsymbol{h}_{k}\}$,
since the feasible set of $\boldsymbol{W}$ and $\boldsymbol{S}$
characterized by (\ref{eq:probability approxiamation}) is always
a subset of that by (\ref{eq:Product of SINR at targets}). } \cite{li2013safe} 
\begin{equation}
\textrm{Pr}\Big(\frac{\boldsymbol{h}_{k}^{H}\boldsymbol{W}\boldsymbol{h}_{k}}{\boldsymbol{h}_{k}^{H}\boldsymbol{S}\boldsymbol{h}_{k}+\sigma_{k}^{2}}\leq\varphi(\gamma)\Big)\geq1-\bar{\rho},\forall k\in\mathcal{K}_{\mathrm{E}},\label{eq:probability approxiamation}
\end{equation}
with $\bar{\rho}=1-(1-\rho)^{\frac{1}{K_{\mathrm{E}}}}$, which ensure
that the probability of each untrusted target's SINR being less than
threshold $\varphi(\gamma)$ is less than $\bar{\rho}$ .

Now, we deal with (\ref{eq:probability approxiamation}). Recall that
$\boldsymbol{h}_{k}=\hat{\boldsymbol{h}_{k}}+\boldsymbol{C}_{k}^{\frac{1}{2}}\boldsymbol{u}_{k},\forall k\in\mathcal{K}_{\mathrm{E}},$
where $\boldsymbol{u}_{k}\sim\!\mathcal{CN}(\boldsymbol{0},\boldsymbol{I})$.
Accordingly, it follows that 
\begin{eqnarray}
\frac{\boldsymbol{h}_{k}^{H}\boldsymbol{W}\boldsymbol{h}_{k}}{\boldsymbol{h}_{k}^{H}\boldsymbol{S}\boldsymbol{h}_{k}+\sigma_{k}^{2}}\leq\varphi(\gamma)\Longleftrightarrow\boldsymbol{u}_{k}^{H}\boldsymbol{A}_{k}\boldsymbol{u}_{k}+2\mathrm{Re}\big\{\boldsymbol{u}_{k}^{H}\boldsymbol{q}_{k}\big\}+c_{k} & \geq & 0,k\in\mathcal{K}_{\mathrm{E}},\label{eq:benin}
\end{eqnarray}
where\begin{subequations} 
\begin{eqnarray}
\boldsymbol{A}_{k} & = & \boldsymbol{C}_{k}^{\frac{1}{2}}\big(\varphi(\gamma)\boldsymbol{S}-\boldsymbol{W}\big)\boldsymbol{C}_{k}^{\frac{1}{2}},\\
\boldsymbol{q}_{k} & = & \boldsymbol{C}_{k}^{\frac{1}{2}}\big(\varphi(\gamma)\boldsymbol{S}-\boldsymbol{W}\big)\hat{\boldsymbol{h}_{k}},\\
c_{k} & = & \hat{\boldsymbol{h}_{k}}^{H}\big(\varphi(\gamma)\boldsymbol{S}-\boldsymbol{W}\big)\hat{\boldsymbol{h}_{k}}+\sigma_{k}^{2}\varphi(\gamma).
\end{eqnarray}
\end{subequations} Based on (\ref{eq:benin}), the constraints in
(\ref{eq:probability approxiamation}) are equivalent to 
\begin{equation}
\textrm{Pr}\Big(\boldsymbol{u}_{k}^{H}\boldsymbol{A}_{k}\boldsymbol{u}_{k}+2\mathrm{Re}\big\{\boldsymbol{u}_{k}^{H}\boldsymbol{q}_{k}\big\}+c_{k}\geq0\Big)\geq1-\bar{\rho},\forall k\in\mathcal{K}_{\mathrm{E}}.\label{eq:restricted outage}
\end{equation}

By substituting (\ref{eq:new secrecy outage}) as (\ref{eq:restricted outage}),
a restricted version of (P2.3) is obtained as 
\begin{eqnarray*}
\textrm{(P2.4):} & \underset{\boldsymbol{W},\boldsymbol{S},\eta}{\min} & D(\boldsymbol{W},\boldsymbol{S},\eta)\\
 & \textrm{s.t.} & \textrm{(\ref{eq:Power constraint}), (\textrm{\ref{eq:Semi definite condtraint}}), (\textrm{\ref{eq:SINR for CU}}), and \text{(\ref{eq:restricted outage})}}.
\end{eqnarray*}
For problem (P2.4), although there is still no closed-form expression
for (\ref{eq:restricted outage}), there have been some well-established
approaches such as the \ac{bti} \cite{WangAntJ14} to transform
(\ref{eq:restricted outage}) into restricted convex constraints.
In the following subsection, we solve problem (P2.4) via the convex
restriction technique based on the \ac{bti}.

\subsection{Convex Restriction of Problem (P2.4) via BTI }

This subsection solves problem (P2.4) by reformulating it into a restricted
problem based on the \ac{bti} in the following lemma \cite{MaHongJ14}. 
\begin{lem}
\label{lem:-(Bernstein-type-inequality)}For any $\boldsymbol{A}\in\mathbb{S}^{n},\boldsymbol{q}\in\mathbb{C}^{n},c\in\mathbb{R},\boldsymbol{v}\sim\mathcal{CN}(\boldsymbol{0},\boldsymbol{I})$,
and $\rho\in(0,1]$, if there exist $a$ and $b$, such that 
\begin{eqnarray}
 &  & \mathrm{Tr}(\boldsymbol{A})-\sqrt{-2\ln(\rho)}\cdot a+\ln(\rho)\cdot b+c\geq0,\label{eq:bti1}\\
 &  & \left\Vert \left[\begin{array}{c}
\sqrt{2}\boldsymbol{q}\\
\mathrm{vec}(\boldsymbol{A})
\end{array}\right]\right\Vert \leq a,\label{eq:bti2}\\
 &  & b\boldsymbol{I}+\boldsymbol{A}\succeq\mathbf{0},\quad b\geq0,\label{eq:bti3}
\end{eqnarray}
the following implication holds true 
\begin{equation}
\mathrm{Pr}\big(\boldsymbol{v}^{H}\boldsymbol{A}\boldsymbol{v}+2\mathrm{Re}\left\{ \boldsymbol{v}^{H}\boldsymbol{q}\right\} +c\geq0\big)\geq1-\rho.
\end{equation}
\end{lem}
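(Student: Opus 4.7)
The plan is to derive the stated implication from the classical Bernstein-type concentration bound for Hermitian quadratic forms in complex Gaussian vectors, by recognizing that conditions (\ref{eq:bti1})--(\ref{eq:bti3}) constitute a convex (jointly linear/LMI in $\boldsymbol{A},\boldsymbol{q},c$) slackening of the tight parameters appearing in that bound. Concretely, set $X(\boldsymbol{v})\triangleq\boldsymbol{v}^{H}\boldsymbol{A}\boldsymbol{v}+2\mathrm{Re}\{\boldsymbol{v}^{H}\boldsymbol{q}\}+c$, so $\mathbb{E}[X]=\mathrm{Tr}(\boldsymbol{A})+c$, and denote $s^{-}(\boldsymbol{A})\triangleq\max(\lambda_{\max}(-\boldsymbol{A}),0)$.

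First, I would establish the core one-sided concentration inequality
\[
\mathrm{Pr}\!\Big(X(\boldsymbol{v})\geq\mathrm{Tr}(\boldsymbol{A})+c-\sqrt{-2\ln\rho}\,\sqrt{\|\boldsymbol{A}\|_{F}^{2}+2\|\boldsymbol{q}\|^{2}}+\ln\rho\cdot s^{-}(\boldsymbol{A})\Big)\geq 1-\rho
\]
via the Chernoff/MGF method. Diagonalize $\boldsymbol{A}=\boldsymbol{U}\boldsymbol{\Lambda}\boldsymbol{U}^{H}$ and make the unitary change of variables $\tilde{\boldsymbol{v}}=\boldsymbol{U}^{H}\boldsymbol{v}\sim\mathcal{CN}(\boldsymbol{0},\boldsymbol{I})$ so that $X-c$ decouples into a sum of independent terms $\lambda_{i}|\tilde v_{i}|^{2}+2\mathrm{Re}\{\tilde v_{i}^{*}\tilde q_{i}\}$. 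Each summand's MGF has a closed form on an interval of the Chernoff parameter $t$ determined by $\min_{i}(1+t\lambda_{i})>0$, i.e.\ $t<1/s^{-}(\boldsymbol{A})$; applying Markov's inequality to $-X$ and optimizing $t$ on that interval produces the standard two-regime exponent $\min\{u^{2}/[4(\|\boldsymbol{A}\|_{F}^{2}+2\|\boldsymbol{q}\|^{2})],\,u/[4s^{-}(\boldsymbol{A})]\}$ for a deviation $u$ below the mean. Equating this exponent to $-\ln\rho$ and solving yields the deviation terms $\sqrt{-2\ln\rho}\sqrt{\|\boldsymbol{A}\|_{F}^{2}+2\|\boldsymbol{q}\|^{2}}$ and $-\ln\rho\cdot s^{-}(\boldsymbol{A})$ appearing above.

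Next, I would read off the monotone relaxation embedded in the three hypotheses. Since $\|\mathrm{vec}(\boldsymbol{A})\|=\|\boldsymbol{A}\|_{F}$, (\ref{eq:bti2}) is exactly $a\geq\sqrt{\|\boldsymbol{A}\|_{F}^{2}+2\|\boldsymbol{q}\|^{2}}$, and (\ref{eq:bti3}) is equivalent to $b\geq s^{-}(\boldsymbol{A})$. Because $\sqrt{-2\ln\rho}\geq 0$ and $\ln\rho\leq 0$ for $\rho\in(0,1]$, the map $(a,b)\mapsto\mathrm{Tr}(\boldsymbol{A})-\sqrt{-2\ln\rho}\,a+\ln\rho\,b+c$ is nonincreasing in each argument. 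Hence hypothesis (\ref{eq:bti1}) at the feasible $(a,b)$ is a lower bound on the same expression evaluated at the tight values $\bigl(\sqrt{\|\boldsymbol{A}\|_{F}^{2}+2\|\boldsymbol{q}\|^{2}},\,s^{-}(\boldsymbol{A})\bigr)$, so the latter expression is also $\geq 0$.

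Combining the two yields the claim: with probability at least $1-\rho$, $X(\boldsymbol{v})$ is at least the deterministic quantity in Step~2's inequality, which by Step~3 is nonnegative. The main obstacle is undoubtedly Step~1: handling both the ``sub-Gaussian'' variance scale $\|\boldsymbol{A}\|_{F}^{2}+2\|\boldsymbol{q}\|^{2}$ and the ``heavy tail'' scale $s^{-}(\boldsymbol{A})$ simultaneously requires tracking the feasible range of the Chernoff parameter carefully, because the MGF blows up as $t\to 1/s^{-}(\boldsymbol{A})$; the cross-term $2\mathrm{Re}\{\tilde v_{i}^{*}\tilde q_{i}\}$ must also be completed-the-square to cleanly produce the $2\|\boldsymbol{q}\|^{2}$ contribution. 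Steps~2 and~3 are then routine monotonicity and arithmetic.
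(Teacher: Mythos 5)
The paper offers no proof of this lemma at all: it is imported verbatim from the cited reference \cite{MaHongJ14} (the standard Bernstein-type inequality for complex Gaussian quadratic forms, originally due to Bechar), so your from-scratch derivation is necessarily a different route. Your Steps 2 and 3 are exactly right and capture the only real content beyond the citation: $\|\mathrm{vec}(\boldsymbol{A})\|=\|\boldsymbol{A}\|_{F}$ makes \eqref{eq:bti2} equivalent to $a\geq\sqrt{\|\boldsymbol{A}\|_{F}^{2}+2\|\boldsymbol{q}\|^{2}}$, \eqref{eq:bti3} is equivalent to $b\geq s^{-}(\boldsymbol{A})$, and since $\sqrt{-2\ln\rho}\geq0$ and $\ln\rho\leq0$ the left-hand side of \eqref{eq:bti1} only decreases when $(a,b)$ is replaced by these tight values, so the core concentration bound applies. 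Step 1 is also the right strategy (unitary diagonalization, decoupling into independent terms, Chernoff bound on $-X$ over $0<t<1/s^{-}(\boldsymbol{A})$), but the intermediate ``two-regime exponent'' you quote, $\min\{u^{2}/[4(\|\boldsymbol{A}\|_{F}^{2}+2\|\boldsymbol{q}\|^{2})],\,u/[4s^{-}(\boldsymbol{A})]\}$, has constants too loose to recover the stated deviation: inverting it gives $u=\max\{2\sqrt{t}\,\sigma,\,4t\,s^{-}(\boldsymbol{A})\}$, which exceeds the required $\sqrt{2t}\,\sigma+t\,s^{-}(\boldsymbol{A})$ and hence proves only a weaker statement. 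The clean execution bounds the log-MGF of each decoupled term by $\theta^{2}(\lambda_{i}^{2}+2|\tilde q_{i}|^{2})/\bigl(2(1-\theta s^{-}(\boldsymbol{A}))\bigr)$ (sub-gamma form) and invokes the standard inversion $\mathrm{Pr}\bigl(Z\leq-\sqrt{2\sigma^{2}t}-bt\bigr)\leq e^{-t}$, which yields precisely the deviation terms you target with $t=-\ln\rho$. With that repair the argument is complete; as written, the sketch would need the sharper MGF bookkeeping to close Step 1.
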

Based on Lemma \ref{lem:-(Bernstein-type-inequality)}, it follows
that (\ref{eq:restricted outage}) can be restricted as the following
set of inequalities.\begin{subequations} 
\begin{eqnarray}
 &  & \mathrm{Tr}(\boldsymbol{A}_{k})-\sqrt{-2\ln(\bar{\rho})}\cdot a_{k}+\ln(\bar{\rho})\cdot b_{k}+c_{k}\geq0,\forall k\in\mathcal{K}_{\mathrm{E}},\label{eq:bti1-1}\\
 &  & \left\Vert \left[\begin{array}{c}
\sqrt{2}\boldsymbol{q}_{k}\\
\mathrm{vec}(\boldsymbol{A}_{k})
\end{array}\right]\right\Vert \leq a_{k},\forall k\in\mathcal{K}_{\mathrm{E}},\label{eq:bti2-1}\\
 &  & b_{k}\boldsymbol{I}+\boldsymbol{A}_{k}\succeq\mathbf{0},\quad b_{k}\geq0,\forall k\in\mathcal{K}_{\mathrm{E}}.\label{eq:bti3-1}
\end{eqnarray}
\end{subequations} 

In other words, if (\ref{eq:bti1-1}), (\ref{eq:bti2-1}), and (\ref{eq:bti3-1})
hold, then (\ref{eq:restricted outage}) is satisfied. By substituting
(\ref{eq:restricted outage}) as (\ref{eq:bti1-1}), (\ref{eq:bti2-1}),
and (\ref{eq:bti3-1}), we obtain a restricted version of problem
(P2.4) as 
\begin{eqnarray*}
\textrm{(P2.5):} & \underset{\boldsymbol{W},\boldsymbol{S},\eta,\{a_{k}\},\{b_{k}\}}{\min} & D(\boldsymbol{W},\boldsymbol{S},\eta)\\
 & \textrm{s.t.} & \textrm{(\ref{eq:Power constraint}), (\textrm{\ref{eq:Semi definite condtraint}}), (\textrm{\ref{eq:SINR for CU}}), (\ref{eq:bti1-1}), (\ref{eq:bti2-1}), and (\ref{eq:bti3-1}) }.
\end{eqnarray*}
Here, the restricted problem (P2.5) is convex and thus can be optimally
solved via CVX, for which the obtained optimal solution is a high-quality
solution to problem (P2.4).

So far, by solving restricted problem (P2.5), problem (P2.4) has been
efficiently solved with a given $\gamma\in\Gamma$. Based on this
together with the \ac{1d} search over $\gamma\in\Gamma$, an efficient
solution to problem (P2.1) is also obtained, which is denoted by $\tilde{\boldsymbol{W}}^{\star},\tilde{\boldsymbol{S}}^{\star},\textrm{ and }\tilde{\eta}^{\star}$.
However, $\tilde{\boldsymbol{W}}^{\star}$ may not satisfy $\textrm{rank}(\tilde{\boldsymbol{W}}^{\star})\le1$
in general and thus the obtained solution may not be feasible to the
original problem (P2).

\subsection{Rank-one Construction of $\boldsymbol{W}$ for Problem (P2)}

In this subsection, we construct a rank-one solution to problem (P2),
as stated in the following proposition. 
\begin{prop}
\textup{\label{prop:rank-one construction for prob error}Based on
the obtained solution }$\tilde{\boldsymbol{W}}^{\star},\tilde{\boldsymbol{S}}^{\star},$\textup{
and $\tilde{\eta}^{\star}$ to problem (P2.1), we can always construct
an equivalent solution $\boldsymbol{W}^{\star},\boldsymbol{S}^{\star},\textrm{ and }\eta^{\star}$,
such that the same objective value in (P2.1) is achieved with rank($\boldsymbol{W}^{\star}$)
= 1. 
\begin{subequations}
\begin{eqnarray}
 &  & \boldsymbol{W}^{\star}=\frac{\tilde{\boldsymbol{W}}^{\star}\boldsymbol{g}\boldsymbol{g}^{H}\tilde{\boldsymbol{W}}^{\star}}{\boldsymbol{g}^{H}\tilde{\boldsymbol{W}}^{\star}\boldsymbol{g}},\label{eq:construction W new}\\
 &  & \boldsymbol{S}^{\star}=\tilde{\boldsymbol{W}}^{\star}+\tilde{\boldsymbol{S}}^{\star}-\boldsymbol{W}^{\star},\label{eq:construction S new}\\
 &  & \eta^{\star}=\tilde{\eta}^{\star}.\label{eq:new eta}
\end{eqnarray}
\end{subequations}
As a result, the constructed solution of $\boldsymbol{W}^{\star},\boldsymbol{S}^{\star},\textrm{ and }\eta^{\star}$
is a feasible high-quality solution to problem (P2).} 
\end{prop}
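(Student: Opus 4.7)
The plan is to mirror the construction argument used for Proposition \ref{prop:1}, but adapted to the probabilistic secrecy outage constraint in (\ref{eq:new secrecy outage}) rather than the worst-case constraint in (P1). The key structural observation is that the beampattern objective $D(\boldsymbol{W},\boldsymbol{S},\eta)$ and the per-antenna power equalities in (\ref{eq:Power constraint}) depend on $\boldsymbol{W}$ and $\boldsymbol{S}$ only through the sum $\boldsymbol{T}=\boldsymbol{W}+\boldsymbol{S}$. Since (\ref{eq:construction S new}) is defined precisely so that $\boldsymbol{W}^{\star}+\boldsymbol{S}^{\star}=\tilde{\boldsymbol{W}}^{\star}+\tilde{\boldsymbol{S}}^{\star}$, the objective value and the power constraints are automatically preserved, and $\eta^{\star}=\tilde{\eta}^{\star}$ remains optimal. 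So the proof reduces to three checks: (i) $\boldsymbol{W}^{\star}$ is rank-one and PSD, (ii) $\boldsymbol{S}^{\star}\succeq\boldsymbol{0}$, and (iii) the CU SINR constraint (\ref{eq:SINR for CU}) and the secrecy outage constraint (\ref{eq:new secrecy outage}) both still hold.

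First I would verify (i) and (ii). By inspection, $\boldsymbol{W}^{\star}$ in (\ref{eq:construction W new}) is of the form $\boldsymbol{u}\boldsymbol{u}^{H}$ with $\boldsymbol{u}=\tilde{\boldsymbol{W}}^{\star}\boldsymbol{g}/\sqrt{\boldsymbol{g}^{H}\tilde{\boldsymbol{W}}^{\star}\boldsymbol{g}}$, hence rank-one and PSD (the case $\boldsymbol{g}^{H}\tilde{\boldsymbol{W}}^{\star}\boldsymbol{g}=0$ can be handled separately, as it forces the CU SINR constraint to be met trivially via $\gamma=0$). For (ii), setting $\boldsymbol{v}=(\tilde{\boldsymbol{W}}^{\star})^{1/2}\boldsymbol{g}$, one can write
\begin{equation}
\tilde{\boldsymbol{W}}^{\star}-\boldsymbol{W}^{\star}=(\tilde{\boldsymbol{W}}^{\star})^{1/2}\Big(\boldsymbol{I}-\tfrac{\boldsymbol{v}\boldsymbol{v}^{H}}{\boldsymbol{v}^{H}\boldsymbol{v}}\Big)(\tilde{\boldsymbol{W}}^{\star})^{1/2}\succeq\boldsymbol{0},
\end{equation}
since the inner factor is an orthogonal projection. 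Then $\boldsymbol{S}^{\star}=\tilde{\boldsymbol{S}}^{\star}+(\tilde{\boldsymbol{W}}^{\star}-\boldsymbol{W}^{\star})\succeq\boldsymbol{0}$ as a sum of PSD matrices.

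Next, for (iii), I would show the CU SINR is exactly preserved and the eavesdropper SINR is not increased, pointwise in $\boldsymbol{h}_{k}$. Direct computation gives $\boldsymbol{g}^{H}\boldsymbol{W}^{\star}\boldsymbol{g}=\boldsymbol{g}^{H}\tilde{\boldsymbol{W}}^{\star}\boldsymbol{g}$, and combined with the sum-preserving identity, $\boldsymbol{g}^{H}\boldsymbol{S}^{\star}\boldsymbol{g}=\boldsymbol{g}^{H}\tilde{\boldsymbol{S}}^{\star}\boldsymbol{g}$, so constraint (\ref{eq:SINR for CU}) holds with the same $\gamma$. For each eavesdropper, applying the Cauchy--Schwarz-type inequality $|\boldsymbol{h}_{k}^{H}\tilde{\boldsymbol{W}}^{\star}\boldsymbol{g}|^{2}\leq(\boldsymbol{h}_{k}^{H}\tilde{\boldsymbol{W}}^{\star}\boldsymbol{h}_{k})(\boldsymbol{g}^{H}\tilde{\boldsymbol{W}}^{\star}\boldsymbol{g})$ (valid because $\tilde{\boldsymbol{W}}^{\star}\succeq\boldsymbol{0}$) yields $\boldsymbol{h}_{k}^{H}\boldsymbol{W}^{\star}\boldsymbol{h}_{k}\leq\boldsymbol{h}_{k}^{H}\tilde{\boldsymbol{W}}^{\star}\boldsymbol{h}_{k}$, and the sum identity then forces $\boldsymbol{h}_{k}^{H}\boldsymbol{S}^{\star}\boldsymbol{h}_{k}\geq\boldsymbol{h}_{k}^{H}\tilde{\boldsymbol{S}}^{\star}\boldsymbol{h}_{k}$. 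Hence
\begin{equation}
\frac{\boldsymbol{h}_{k}^{H}\boldsymbol{W}^{\star}\boldsymbol{h}_{k}}{\boldsymbol{h}_{k}^{H}\boldsymbol{S}^{\star}\boldsymbol{h}_{k}+\sigma_{k}^{2}}\leq\frac{\boldsymbol{h}_{k}^{H}\tilde{\boldsymbol{W}}^{\star}\boldsymbol{h}_{k}}{\boldsymbol{h}_{k}^{H}\tilde{\boldsymbol{S}}^{\star}\boldsymbol{h}_{k}+\sigma_{k}^{2}}\quad\forall\,\boldsymbol{h}_{k}.
\end{equation}

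The step I expect to be the most delicate is translating this pointwise SINR inequality into preservation of the probabilistic secrecy constraint (\ref{eq:new secrecy outage}). Because the inequality above holds for every realization of $\boldsymbol{h}_{k}$, the event $\{\boldsymbol{h}_{k}^{H}\boldsymbol{W}^{\star}\boldsymbol{h}_{k}/(\boldsymbol{h}_{k}^{H}\boldsymbol{S}^{\star}\boldsymbol{h}_{k}+\sigma_{k}^{2})\leq\varphi(\gamma)\}$ contains the corresponding event for the tilde-solution, and hence has probability at least $1-\bar{\rho}$; invoking independence of $\{\boldsymbol{h}_{k}\}$ across $k\in\mathcal{K}_{\mathrm{E}}$ then restores (\ref{eq:new secrecy outage}) through (\ref{eq:independent hks}). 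Combining (i)--(iii) shows that the constructed triple $(\boldsymbol{W}^{\star},\boldsymbol{S}^{\star},\eta^{\star})$ is feasible for problem (P2.1) with the same objective value as $(\tilde{\boldsymbol{W}}^{\star},\tilde{\boldsymbol{S}}^{\star},\tilde{\eta}^{\star})$ and additionally satisfies the rank-one constraint (\ref{eq:rank constraint-1}), so it is a high-quality feasible solution to (P2).
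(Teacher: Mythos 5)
Your proposal is correct and follows essentially the same route as the paper's proof: preserve the sum $\boldsymbol{W}+\boldsymbol{S}$, establish $\tilde{\boldsymbol{W}}^{\star}\succeq\boldsymbol{W}^{\star}\succeq\boldsymbol{0}$ and $\boldsymbol{S}^{\star}\succeq\tilde{\boldsymbol{S}}^{\star}\succeq\boldsymbol{0}$, deduce the pointwise eavesdropper-SINR domination, and conclude via event containment that the outage constraint is preserved. Your projection-based phrasing of $\tilde{\boldsymbol{W}}^{\star}-\boldsymbol{W}^{\star}\succeq\boldsymbol{0}$ and the explicit handling of the degenerate case $\boldsymbol{g}^{H}\tilde{\boldsymbol{W}}^{\star}\boldsymbol{g}=0$ are minor cosmetic refinements of the paper's Cauchy--Schwarz argument, not a different method.
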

\begin{proof} Please refer to Appendix B. \end{proof}
\begin{rem}
It is interesting to discuss the relations among the original problem
(P2), the relaxed problem (P2.2), and the restricted problem (P2.5).
Let $\Pi_{1}$, $\Pi_{2}$, and $\Pi_{3}$ denote the feasible regions
of $\boldsymbol{W}$, $\boldsymbol{S}$, and $\eta$ to (P2), (P2.2),
and (P2.5), respectively. It is noting that problem (P2.2) is a relaxed
version of (P2) obtained by dropping the rank constraint, and problem
(P2.5) is a restricted version of problem (P2.2). Thus, $\Pi_{1}$
is a subset of $\Pi_{2}$, and $\Pi_{3}$ is a different subset of
$\Pi_{2}$. Therefore, the constructed rank-one solution $\boldsymbol{W}^{\star},\boldsymbol{S}^{\star},\textrm{ and }\eta^{\star}$
in Proposition \ref{prop:rank-one construction for prob error} are
not necessarily feasible to the corresponding restricted problem (P2.5),
though it is ensured to be feasible for (P2). However, it is interesting
to find that $\boldsymbol{W}^{\star},\boldsymbol{S}^{\star},\textrm{ and }\eta^{\star}$
in Proposition \ref{prop:rank-one construction for prob error} are
always feasible to problem (P2.5) in our simulations. 
\end{rem}
\begin{rem}
By comparing Proposition \ref{prop:1} and Proposition \ref{prop:rank-one construction for prob error},
it is interesting to observe that similar construction methods are
adopted to construct the desired rank-one solutions, for which the
intuitions can be explained as follows. In particular, as the BS has
the perfect CSI of the CU in both scenarios, we can always project
the general-rank information beamforming matrix $\tilde{\boldsymbol{W}}^{*}$
or $\tilde{\boldsymbol{W}}^{\star}$ into the subspace of the CU's
channel (see (\ref{eq:construction W}) or (\ref{eq:construction W new})),
which will not affect the SINR at the CU. At the same time, we allocate
the remaining power after projection to the sensing/AN beams $\boldsymbol{S}^{*}$
or $\boldsymbol{S}^{\star}$ (see (\ref{eq:construction S}) or (\ref{eq:construction S new}))
such that the SINR at all the eavesdroppers would be degraded, thus
improving the secure communication performance. As a result, based
on such construction methods, we obtain the desired rank-one information
beamforming solution for both problems (P1) and (P2). 
\end{rem}

\section{Numerical Results}

This section provides numerical results to validate the performance
of our proposed joint information and sensing beamforming designs
for secure \ac{isac}. In the simulations, the BS is equipped with
$N=8$ antennas. We set the noise powers at the \ac{cu} and all
the eavesdroppers to be identical as $-50\textrm{ dBm},$ i.e., $\sigma_{0}^{2}=\sigma_{k}^{2}=-50\textrm{ dBm},\forall k\in\mathcal{K}_{\mathrm{E}}$
\cite{hua2021optimal}. We assume that the signal attenuation from
the BS to all the targets are identical to be $\varphi=50\mathrm{\textrm{ dB}}$
corresponding to an equal distance of 5 meters, and the signal attenuation
from the BS to the CU to be $\alpha=70\textrm{ dB}$ corresponding
to a distance of 50 meters. Furthermore, the transmit power budget
at each antenna of the BS is set to be $Q=\frac{1}{N}\textrm{ Watt}$
such that the total transmit power at the BS is 1 Watt or 30 dBm.
We choose $M=500$ angles for $\{\bar{\theta}_{m}\}$, which are uniformly
sampled over $[-\frac{\pi}{2},\frac{\pi}{2}]$. We also set the width
of beampattern angle as $\Delta\theta=10{^\circ}$. The normalized
spacing between two adjacent antennas is set as $\frac{d}{\lambda}=0.5.$
There are $K=4$ targets located at angles $-15{^{\circ}}$, $15{^\circ}$,
$-45{^\circ}$, and $45{^\circ}$, i.e., $\{\theta_{k}\}=\{\hat{\theta_{k}}\}=\{-15{^{\circ}},15{^{\circ}},-45{^{\circ}},45{^{\circ}}\}$,
among which the first $K_{\mathrm{E}}=2$ targets (located at $-15{^{\circ}}$
and $15{^{\circ}}$) are untrusted targets. Furthermore, we consider
the \ac{los} channel from the BS to the CU, i.e., $\boldsymbol{g}=\sqrt{10^{-7}}\boldsymbol{a}(\theta_{0})$.
The weight of cross-correlation pattern is set as $\omega_{C}=1$.
For performance comparison, we consider the separate information and
sensing beamforming design as a benchmark scheme as follows. 
\begin{itemize}
\item Separate design: The information and sensing beamformers are separately
designed in the following two stages. First, we obtain the information
beamforming vector $\boldsymbol{w}_{0}$ or $\boldsymbol{W}=\boldsymbol{w}_{0}\boldsymbol{w}_{0}^{H}$
to minimize the transmit power $\textrm{Tr}(\boldsymbol{W})$ while
ensuring the secrecy requirements for bounded CSI errors or Gaussian
errors (i.e., (\ref{eq:worst-case secrecy rate constraint}) and (\ref{eq:Secrecy outage})),
respectively. These problems have been well investigated in literature
\cite{huang2011robust,MaHongJ14}. Denote the obtained information
beamforming covariance solution as $\bar{\boldsymbol{W}}^{\star}$.
Next, with the obtained $\bar{\boldsymbol{W}}^{\star}$, we design
the dedicated sensing signal by restricting it in the null space of
the CU channel to guarantee the secrecy requirements. Towards this
end, we set $\boldsymbol{S}=\boldsymbol{Q}_{2}\bar{\boldsymbol{S}_{2}}\boldsymbol{Q}_{2}^{H}$,
where $\boldsymbol{Q}_{2}=\boldsymbol{I}-\boldsymbol{g}\boldsymbol{g}^{H}/\|\boldsymbol{g}\|^{2}$.
Then, we optimize the sensing beampattern subject to a power constraint
as
\begin{eqnarray}
 & \underset{\bar{\boldsymbol{S}_{2}}\succeq\boldsymbol{0},\eta}{\min} & \frac{1}{M}\sum_{m=1}^{M}\big|\eta\hat{P}(\bar{\theta}_{m})-\boldsymbol{a}^{H}(\bar{\theta}_{m})(\bar{\boldsymbol{W}}^{\star}+\boldsymbol{Q}_{2}\bar{\boldsymbol{S}_{2}}\boldsymbol{Q}_{2}^{H})\boldsymbol{a}(\bar{\theta}_{m})\big|^{2}\label{eq:sep beampattern optimization}\\
 &  & +\frac{2\omega_{C}}{K^{2}-K}\sum_{p=1}^{K}\sum_{q=p+1}^{K}\big|\boldsymbol{a}^{H}(\hat{\theta_{p}})(\bar{\boldsymbol{W}}^{\star}+\boldsymbol{Q}_{2}\bar{\boldsymbol{S}_{2}}\boldsymbol{Q}_{2}^{H})\boldsymbol{a}(\hat{\theta_{q}})\big|^{2}\nonumber \\
 & \textrm{s.t.} & \mathrm{Tr}(\boldsymbol{E}_{i}\boldsymbol{Q}_{2}\bar{\boldsymbol{S}_{2}}\boldsymbol{Q}_{2}^{H}+\boldsymbol{E}_{i}\bar{\boldsymbol{W}}^{\star})=Q,\forall i\in\{1,\dots,N\}.\nonumber 
\end{eqnarray}
Problem (\ref{eq:sep beampattern optimization}) is a convex optimization
problem that can be solved via CVX, for which the optimal solution
is obtained by $\bar{\boldsymbol{S}_{2}}^{\star}$. Accordingly, we
have the sensing beamformers as $\bar{\boldsymbol{S}}^{\star}=\boldsymbol{Q}_{2}\bar{\boldsymbol{S}_{2}}^{\star}\boldsymbol{Q}_{2}^{H}$.
\end{itemize}

In Section V-A, we first evaluate the performance of our proposed
design in the special case with perfect CSI, which corresponds to
(P1) with $\boldsymbol{e}_{k}=\boldsymbol{0}$ or (P2) with $\boldsymbol{C}_{k}=\boldsymbol{0}$,
$\forall k\in\mathcal{K}_{\mathrm{E}}$, and $\rho=0$ (see Remark
\ref{thm:It-is-worth}), as compared to the separate design benchmark.
In Section V-B and Section V-C, we consider the two scenarios with
bounded and Gaussian CSI errors of eavesdroppers, respectively.

\subsection{Special Case with Perfect CSI}

In this subsection, we consider the special case with perfect CSI.
In this case, we assume \ac{los} channels from the BS to all the
potential eavesdroppers with $\boldsymbol{h}_{k}=\sqrt{10^{-3}}\boldsymbol{a}(\theta_{k})$,
$\forall k\in\mathcal{K}_{\mathrm{E}}$. 

First, Fig. \ref{fig:2} shows the achieved transmit beampatterns
by the proposed design (i.e., the joint information and sensing beampattern),
in which the secrecy rate threshold is set as $R_{0}=4\textrm{ bps/Hz}$,
and the CU is located at $\theta_{0}=0{^\circ}$. To reveal more insights
and for comparison, we also show the transmit information beampattern
part (i.e., $\boldsymbol{a}^{H}(\theta)\boldsymbol{W}\boldsymbol{a}(\theta)$
achieved by the information signals $\boldsymbol{W}$ only) and the
dedicated sensing signal beampattern part (i.e., $\boldsymbol{a}^{H}(\theta)\boldsymbol{S}\boldsymbol{a}(\theta)$
achieved by the sensing signals $\boldsymbol{S}$ only) under our
proposed design, as well as the optimal sensing-only beampattern $\boldsymbol{a}^{H}(\theta)\boldsymbol{R}_{\mathrm{sen}}\boldsymbol{a}(\theta)$,
where $\boldsymbol{R}_{\mathrm{sen}}$ is obtained by solving the
sensing beampattern optimization problem (e.g., (P1) with $R_{0}=0\textrm{ bps/Hz}$).
It is observed that the information signal beampatterns are designed
towards the \ac{cu}, while the sensing signal beampatterns are
designed towards both untrusted and trusted targets, thus maximizing
the sensing performance while ensuring the CU's secure communication
requirements. It is also observed that there is a significant discrepancy
between the achieved joint beampattern and the desired beampattern
at the angle of CU ($0{^\circ}$), as the information signal beams
need to be steered towards this direction to satisfy the secure communication
requirement. Furthermore, for the proximate angles of the eavesdropping
targets, the information signal beams are suppressed substantially
to reduce the potential of information leakage, while the energies
of sensing signal beams are focused on these angles to ensure the
sensing requirements and degrade the quality of the eavesdropping
channels.

Fig. \ref{fig:3} shows the sensing beampattern error $D(\boldsymbol{W},\boldsymbol{S},\eta)$
versus the secrecy rate threshold $R_{0}$ with $\theta_{0}=3{^\circ}$.
It is observed that in the whole regime of $R_{0}$, the proposed
optimal design achieves lower sensing beampattern errors than that
of the separate design benchmark scheme. The performance gap is around
20\% of the sensing beampattern error achieved by the sensing-only
benchmark. When $R_{0}$ becomes sufficiently low, the optimal design
approaches the error lower bound obtained by the sensing-only benchmark,
but the separate design performs worse than the optimal design. This
is because for the separate design, the sensing beam is designed within
the null space of the CU channel, thus limiting the available design
DoF for sensing and degrading its performance. 

\begin{figure}
\centering%
\begin{minipage}[t]{0.4\textwidth}%
\centering

\includegraphics[scale=0.48]{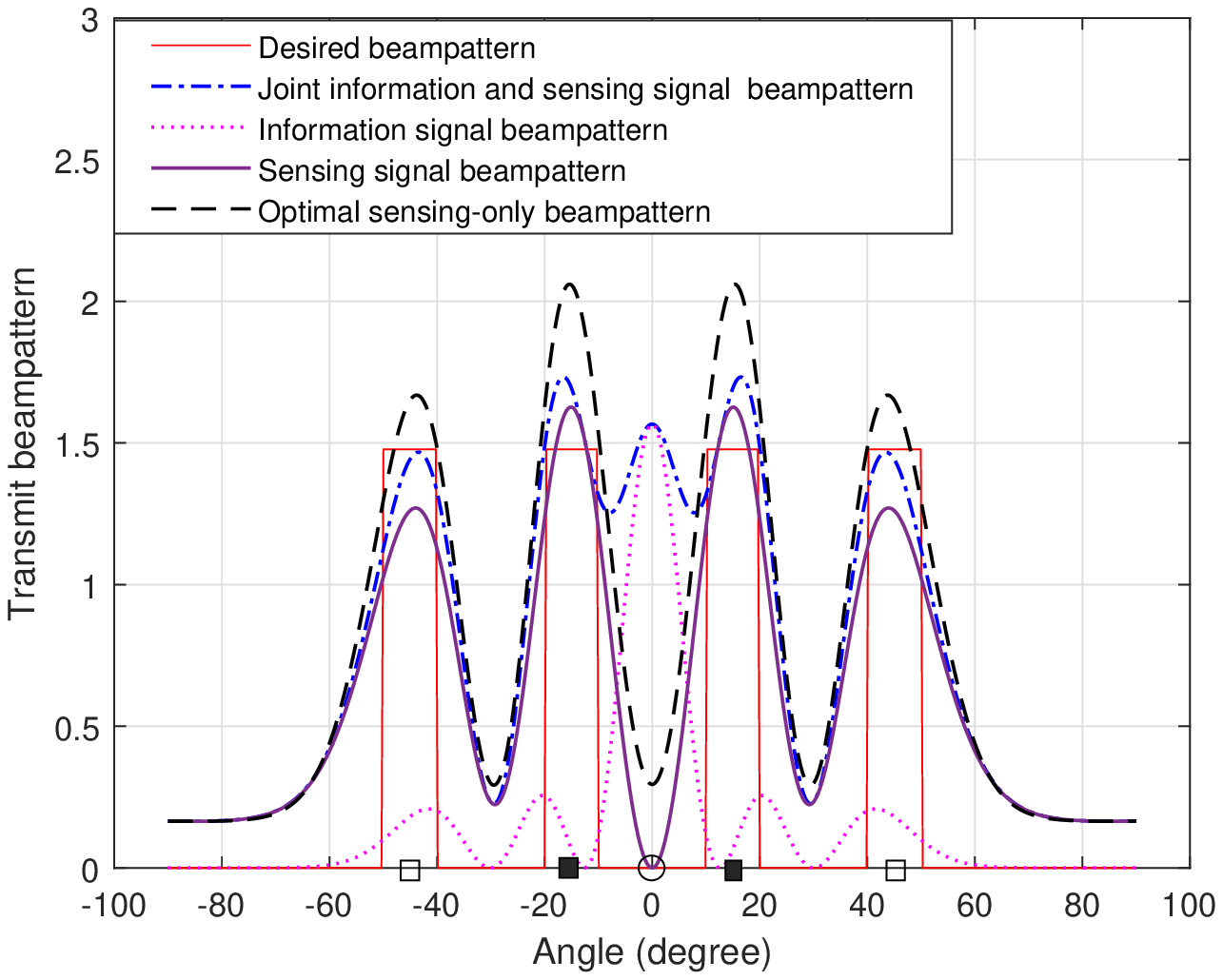}\caption{\label{fig:2}The transmit beampatterns in the special case of perfect
CSI, where $R_{0}=4\textrm{ bps/Hz}$. The circle indicates the CU
angle, the solid rectangles indicate the untrusted target angles,
and the hollow rectangles indicate the trusted ones.}
\end{minipage}\hspace{0.15in}%
\begin{minipage}[t]{0.4\textwidth}%
\centering\includegraphics[scale=0.48]{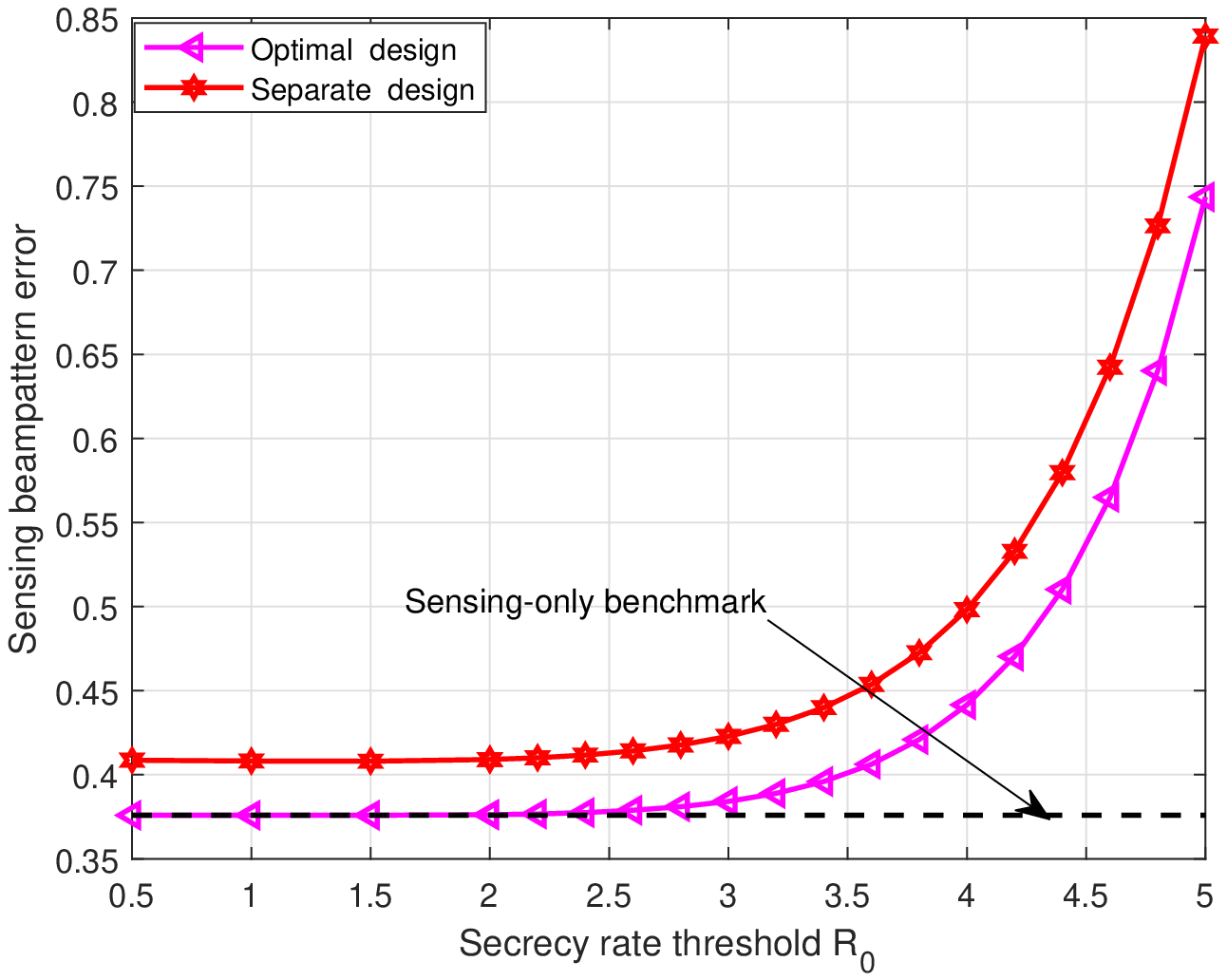}\centering\caption{\label{fig:3}The sensing beampattern error $D(\boldsymbol{W},\boldsymbol{S},\eta)$
versus the secrecy rate threshold $R_{0}$ in the special case of
perfect CSI.}
\end{minipage}
\end{figure}

Fig. \ref{fig:4} shows the sensing beampattern error $D(\boldsymbol{W},\boldsymbol{S},\eta)$
versus the CU angle $\theta_{0}$, with $R_{0}=4\textrm{ bps/Hz}$.
Notice that the secrecy rate constraint $R_{0}=4\textrm{ bps/Hz}$
cannot be satisfied when $\theta_{0}\in(-20{^\circ},-10{^\circ})\cup(10{^\circ},20{^\circ})$,
as the CU is in the neighborhood of the eavesdropping targets such
that no secure information transmission is feasible. As a result,
there is no sensing beampattern error shown in these regions. First,
it is observed that the separate design performs inferior to our proposed
optimal design in term of sensing beampattern error. This is consistent
with the observation in Fig. \ref{fig:3}. When the CU is located
close to a trusted target (i.e., $\theta_{0}=-45{^\circ}$ and $45{^\circ}$),
the sensing beampattern error is observed to be low, as the information
signals play the dual role of communicating with the CU and sensing
the trusted target concurrently. 
\begin{figure}
\centering%
\begin{minipage}[t]{0.4\textwidth}%
\centering\includegraphics[scale=0.48]{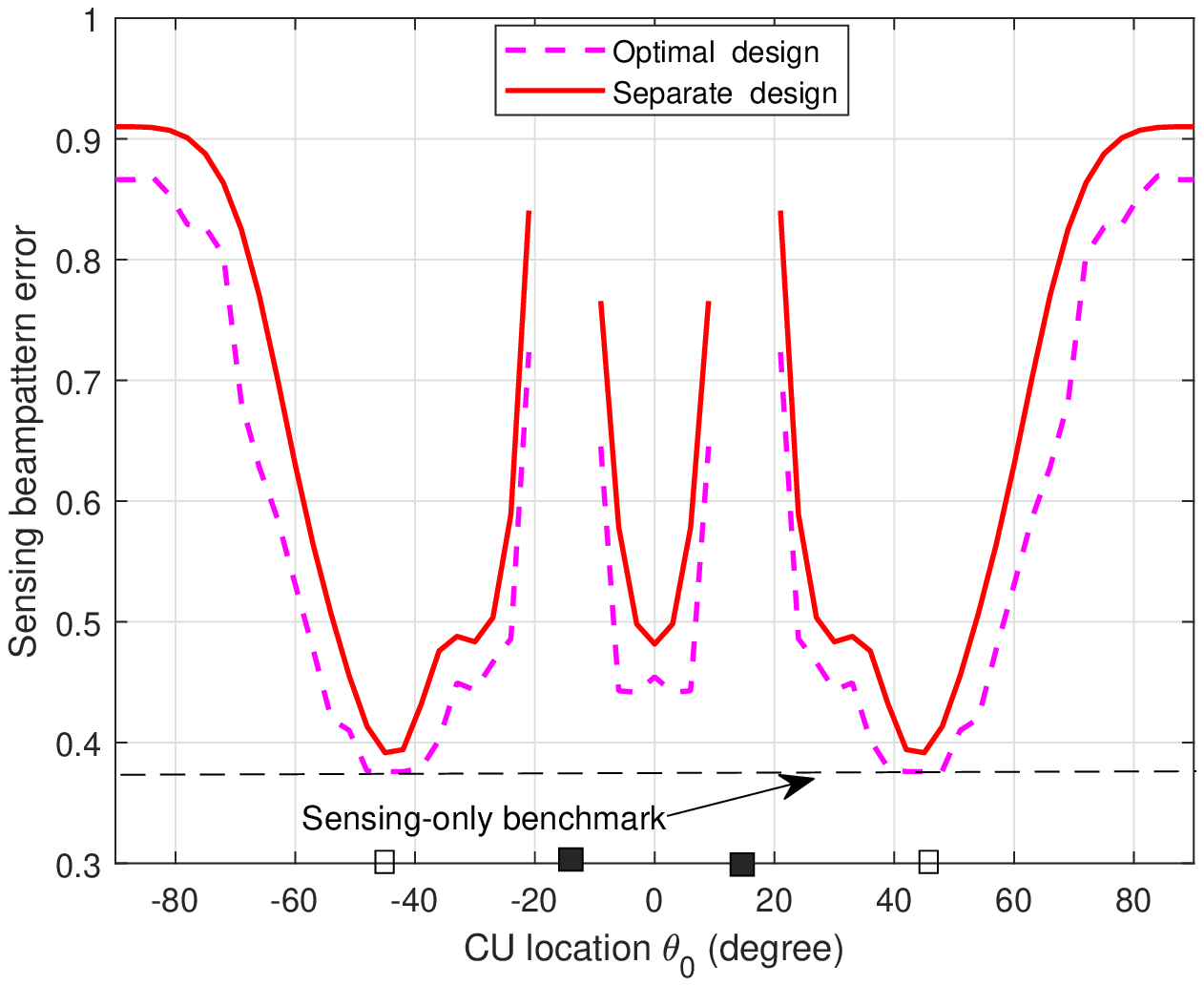}

\caption{\label{fig:4}The sensing beampattern error $D(\boldsymbol{W},\boldsymbol{S},\eta)$
versus the CU angle $\theta_{0}$ in the special case of perfect CSI.
The solid rectangles indicate untrusted target angles, and the hollow
rectangles indicate trusted ones.}
\end{minipage}\hspace{0.15in}%
\begin{minipage}[t]{0.4\textwidth}%
\centering

\includegraphics[scale=0.48]{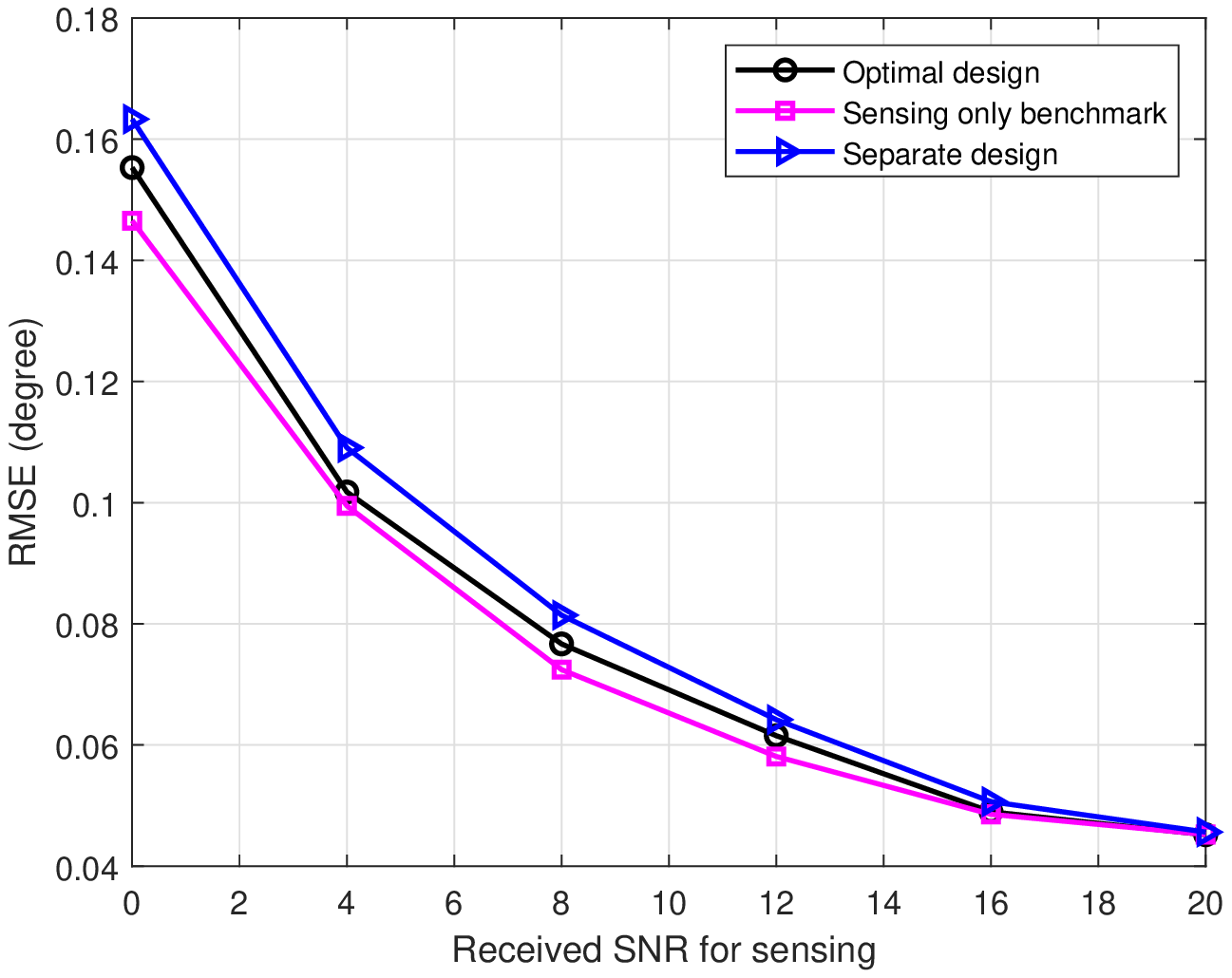}\centering\caption{\label{fig:6}RMSE versus the received SNR for sensing with perfect
CSI.}
\end{minipage}
\end{figure}


Furthermore, besides showing the behaviors of the transmit beampatterns,
we also show the actual angle estimation performance of multiple targets
at the BS sensing receiver to validate the efficiency of our proposed
joint transmit beamforming design. In the simulation, we consider
the sequence with a number of $L=256$ symbols for multi-target estimation,
in which the transmit signal $\boldsymbol{x}(n)=\boldsymbol{w}_{0}s_{0}(n)+\boldsymbol{S}^{1/2}\boldsymbol{u}(n)$
in each symbol $n$ is generated based on $\boldsymbol{u}(n)\sim\mathcal{CN}(\boldsymbol{0},\boldsymbol{I})$
and $s_{0}(n)\sim\mathcal{CN}(0,1)$, $n\in\{1,\ldots,L\}$. Also,
we adopt the sample correlation matrix $\boldsymbol{R}_{xx}=\frac{1}{L}\sum_{n=1}^{L}\boldsymbol{x}(n)\boldsymbol{x}^{H}(n)$
to approximate the transmit covariance matrix $\boldsymbol{W}+\boldsymbol{S}$.
It was shown in \cite{StoPETLiJ07} that their difference is quite
small when the sample length $L=256$. Accordingly, the angles $\{\theta_{k}\}$
are estimated based on the received signals in \eqref{eq:Received signal at CU}
via the Capon technique \cite{xu2008target}. In particular, the Capon
spatial spectrum is obtained as 
\begin{equation}
\frac{\big|\boldsymbol{a}^{H}(\theta)\boldsymbol{R}_{yy}^{-1}\boldsymbol{R}_{yx}\boldsymbol{a}(\theta)\big|}{\big(\boldsymbol{a}^{H}(\theta)\boldsymbol{R}_{yy}^{-1}\boldsymbol{a}(\theta)\big)\big(\boldsymbol{a}^{T}(\theta)\boldsymbol{R}_{xx}\boldsymbol{a}^{c}(\theta)\big)},\label{eq:Capon}
\end{equation}
where $\boldsymbol{R}_{yy}=\frac{1}{L}\sum_{n=1}^{L}\boldsymbol{y}(n)\boldsymbol{y}^{H}(n)$,
$\boldsymbol{R}_{xx}=\frac{1}{L}\sum_{n=1}^{L}\boldsymbol{x}(n)\boldsymbol{x}^{H}(n)$,
and $\boldsymbol{R}_{yx}=\frac{1}{L}\sum_{n=1}^{L}\boldsymbol{y}(n)\boldsymbol{x}^{H}(n)$.
In the implementation of the Capon technique, we uniformly divide
$[-\frac{\pi}{2},\frac{\pi}{2}]$ into 1000 grids to generate the
Capon spatial spectrum, and accordingly find the estimated angles
as the $K$ ones with peak spectrum values.

Fig. \ref{fig:6} shows the angle estimation root mean squared error
(RMSE) by the Capon technique versus the received SNR for sensing
at the BS, which is obtained by averaging over 1000 random realizations.
Here, the RMSE is defined as $\mathrm{RMSE}=\sqrt{\frac{1}{K}\sum_{k=1}^{K}(\tilde{\theta}_{k}-\theta_{k})^{2}}$,
where $\{\tilde{\theta}_{k}\}$ denote the estimate of $\{\theta_{k}\}$.
It is observed that the sensing-only benchmark achieves the best sensing
performance. Furthermore, the proposed optimal design is observed
to lead to a lower RMSE than the separate design. This shows that
our design for minimizing the sensing beampattern error is efficient
in enhancing the multi-target angle estimation accuracy, since minimizing
the cross-correlation pattern in (\ref{eq:cross}) ensures the resolution
of different angles and minimizing the beampattern matching MSE in
(\ref{eq:bme}) ensures the energy towards directions of angles.




\begin{figure}
\centering%
\begin{minipage}[t]{0.4\textwidth}%
 \centering\includegraphics[scale=0.48]{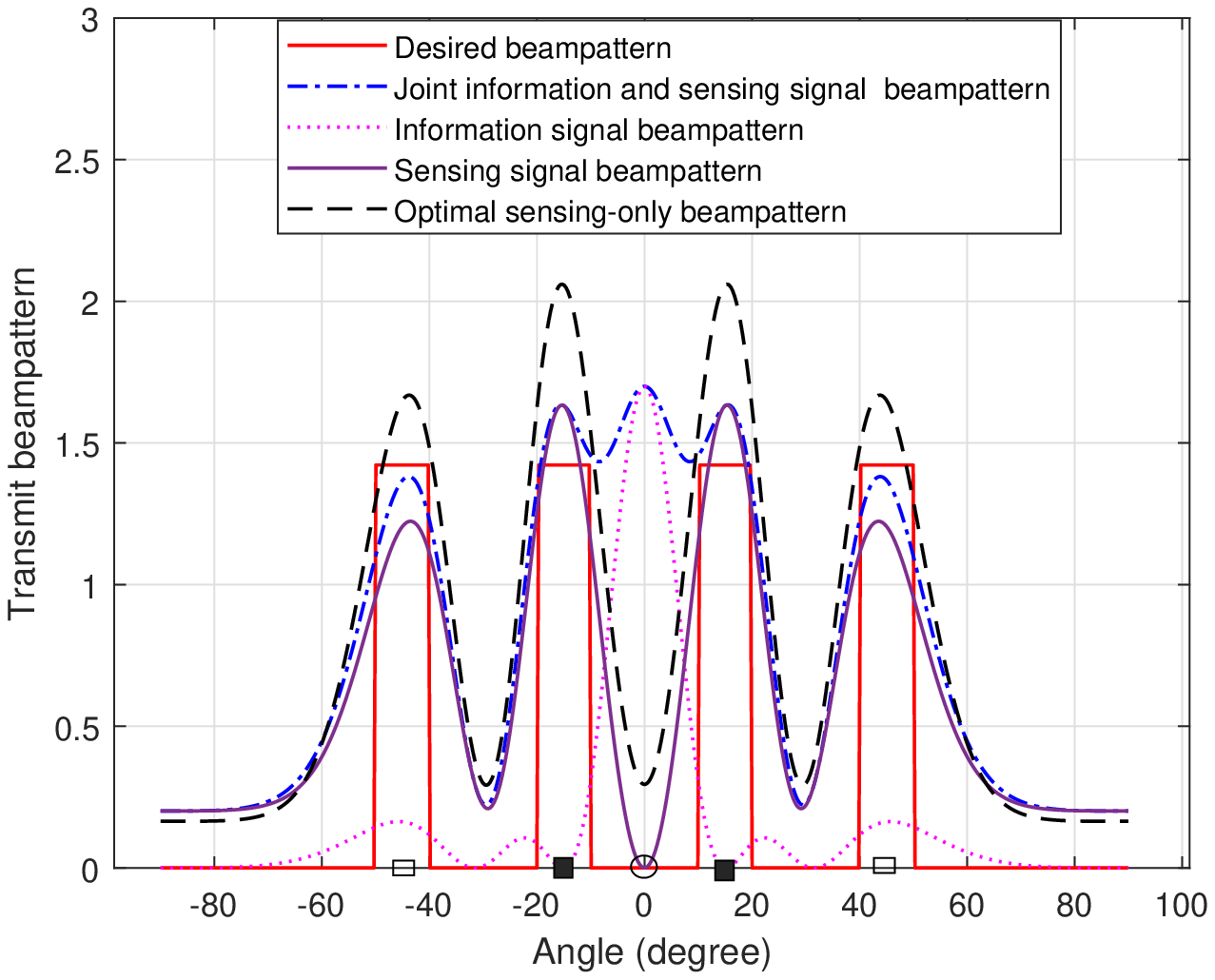}\centering\caption{\label{fig:7}Transmit beampattern with bounded CSI errors, $R_{0}=4\textrm{ bps/Hz}$,
$\mu=0.3$. The circle indicates the direction of the CU and the rectangles
indicate directions of targets with solid rectangles for the untrusted
ones and hollow rectangles for the trusted ones.}
\end{minipage}\hspace{0.15in}%
\begin{minipage}[t]{0.4\textwidth}%
\centering

\includegraphics[scale=0.48]{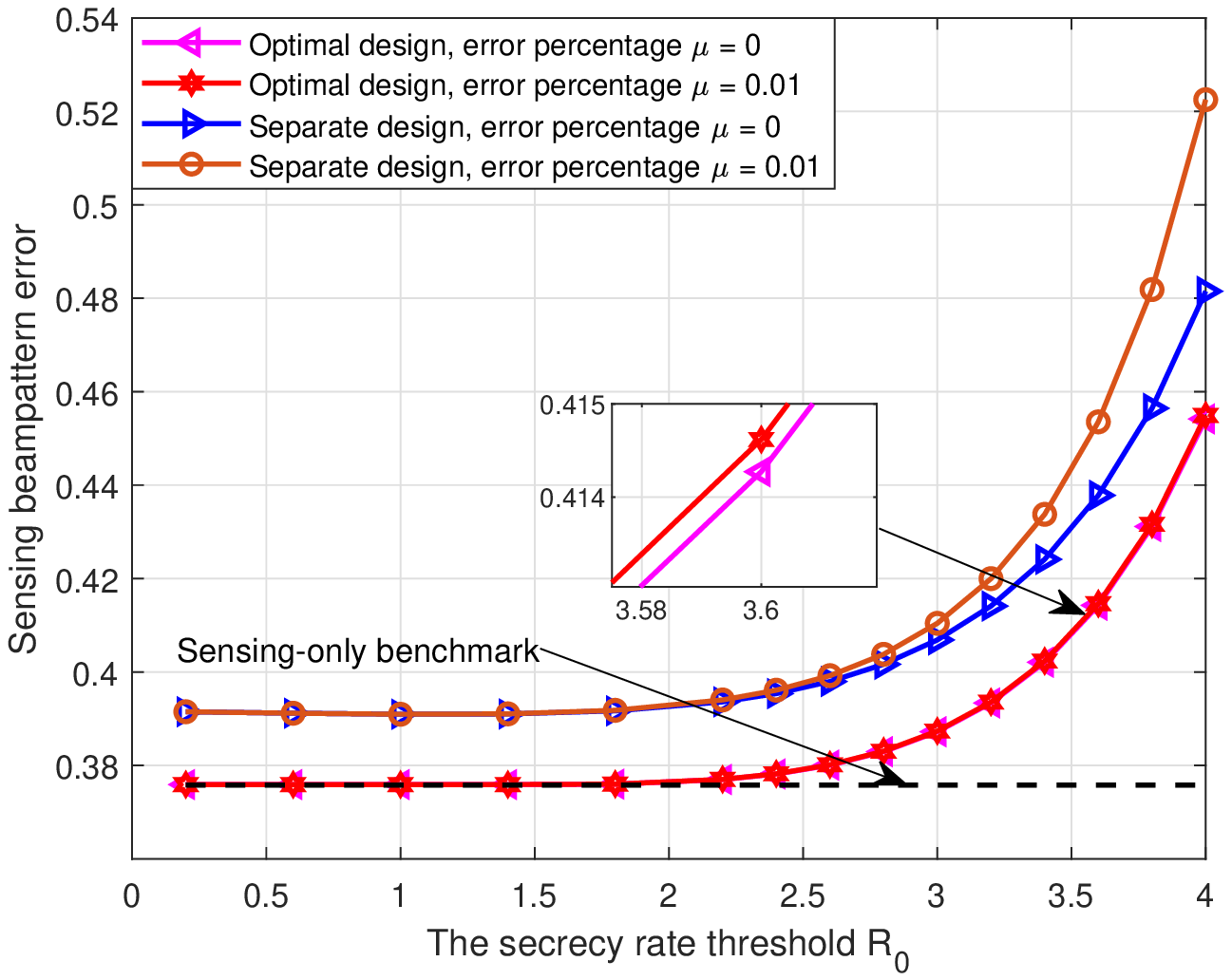}\centering\caption{\label{fig:8}The sensing beampattern error $D(\boldsymbol{W},\boldsymbol{S},\eta)$
versus the worst-case secrecy rate threshold $R_{0}$ with bounded
CSI errors of eavesdroppers.}
\end{minipage}
\end{figure}

\subsection{Scenario with Bounded CSI Errors}

Next, we consider the scenario with bounded CSI errors of eavesdroppers,
in which the estimated channels of each eavesdropping target are set
as $\hat{\boldsymbol{h}_{k}}=\sqrt{10^{-3}}\boldsymbol{a}(\theta_{k})$,
$\forall k\in\mathcal{K}_{\mathrm{E}}$, and the corresponding CSI
error bound is set as $\varepsilon_{k}=\mu|\hat{\boldsymbol{h}_{k}}|$,
$\forall k\in\mathcal{K}_{\mathrm{E}}$, where $\mu$ denotes the
error percentage. The CU is located at angle $\theta_{0}=0{^\circ}$.

Fig. \ref{fig:7} shows the achieved transmit beampattern, where $R_{0}=4\textrm{ bps/Hz}$
and $\mu=0.3$. By comparing it versus the transmit beampattern with
perfect CSI in Fig. \ref{fig:2}, it is observed that the information
beampattern at the CU angle of $0{^\circ}$ becomes narrower for the
purpose of strengthening the confidential information transmission
to combat the potential information leakage caused by the uncertain
eavesdropping channels. Furthermore, the information transmit beampattern
is observed to be further suppressed at proximate angles of these
untrusted targets ($-15\lyxmathsym{\textdegree}$and $15\lyxmathsym{\textdegree}$)
compared to the case of perfect CSI. As such, the transmit beampatterns
around these untrusted targets are observed to be provided by the
sensing signal beampatterns only, which also help further prevent
potential eavesdropping.

Next, we consider the performance comparison of separate design benchmark
and the optimal design in the bounded CSI errors case. Fig. \ref{fig:8}
shows the sensing beampattern error $D(\boldsymbol{S},\boldsymbol{W},\eta)$
versus the secrecy rate threshold $R_{0}$, in which two different
error percentages (i.e., $\mu$ = 0.01 and 0) are considered. It is
observed that the performance gap between the separate design benchmark
and the optimal (robust) design becomes more significant in the bounded
CSI errors case ($\mu$ = 0.01) than that of the case of perfect CSI,
which indicates that the separate design is actually not robust against
errors. This is due to the fact in the separate design, when deriving
the information beamforming vector, we do not consider the AN role
of sensing signals in degrading the quality of eavesdropping channels.
Thus, this leads to inactive secure communication requirement constraints
and more energy is allocated to the information beamformers than that
needed in the optimal design. The redundant power allocation for information
beamforming in the scenario with bounded CSI errors deteriorates the
sensing performance. It is also observed that higher eavesdropping
channel uncertainty (or a larger value of $\mu$) leads to more significant
sensing beampattern errors, as more transmit power is allocated to
ensure secure communication, thus causing more severe beampattern
distortion. 

Then, we evaluate the angle estimation performance of the separate
design and the optimal design via the Capon technique. Fig. \ref{fig:9}
shows the estimation RMSE versus the received SNR for sensing, where
the error percentage $\mu$ is set as 0.02. For comparison, we also
show the sensing performance achieved by the sensing-only benchmark.
It is observed that the sensing performance of the optimal design
is significantly better than the separate design and performs close
to the sensing-only benchmark, especially when the SNR becomes large.
This shows the robustness of our proposed designs to the bounded CSI
errors and is consistent with the observation in Fig. \ref{fig:8}.

\begin{figure}
\centering%
\begin{minipage}[t]{0.4\textwidth}%
\centering\includegraphics[scale=0.48]{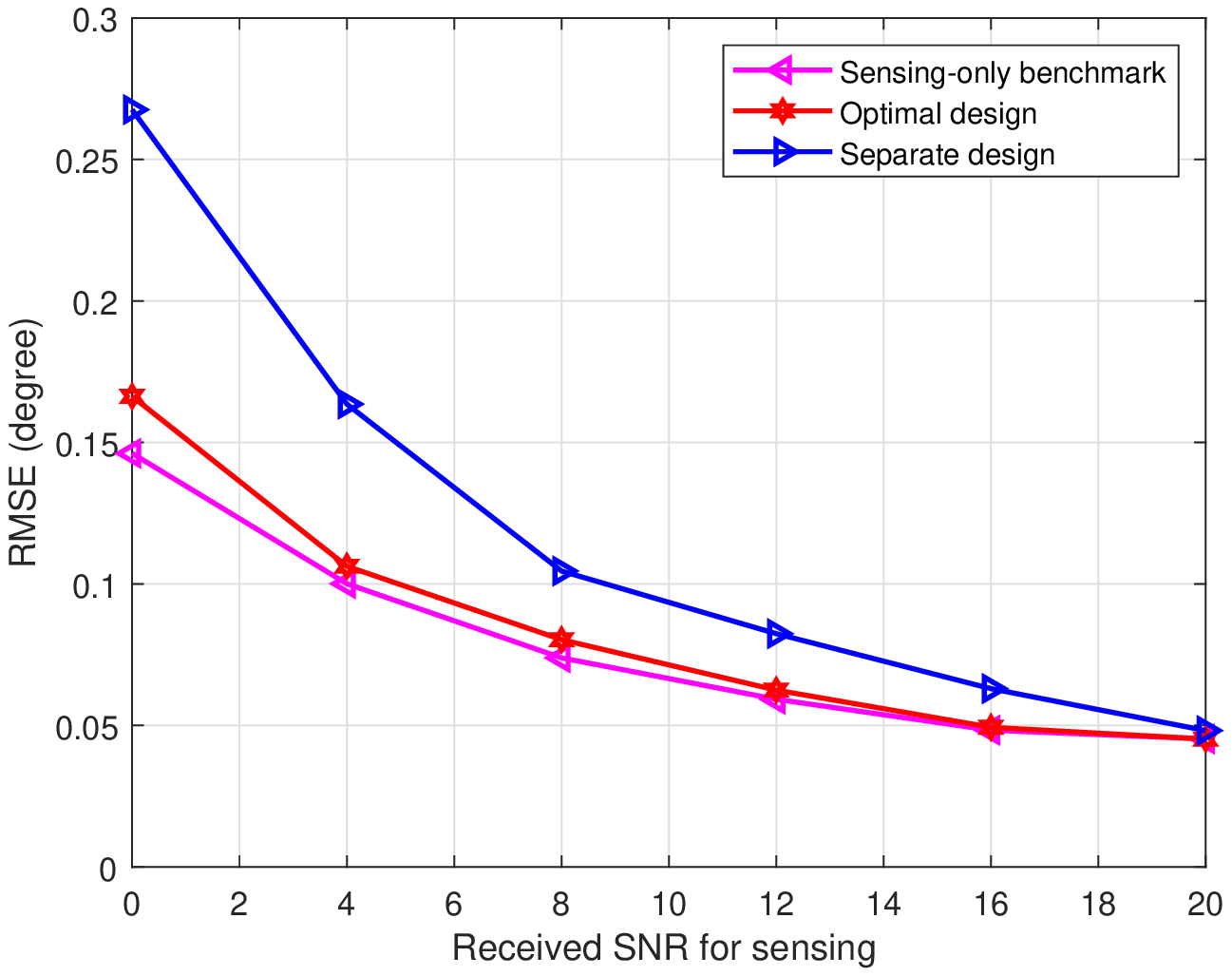}\caption{\label{fig:9}RMSE versus received SNR for sensing with bounded CSI
errors.}
\end{minipage}\hspace{0.15in}%
\begin{minipage}[t]{0.4\textwidth}%
\centering

\includegraphics[scale=0.48]{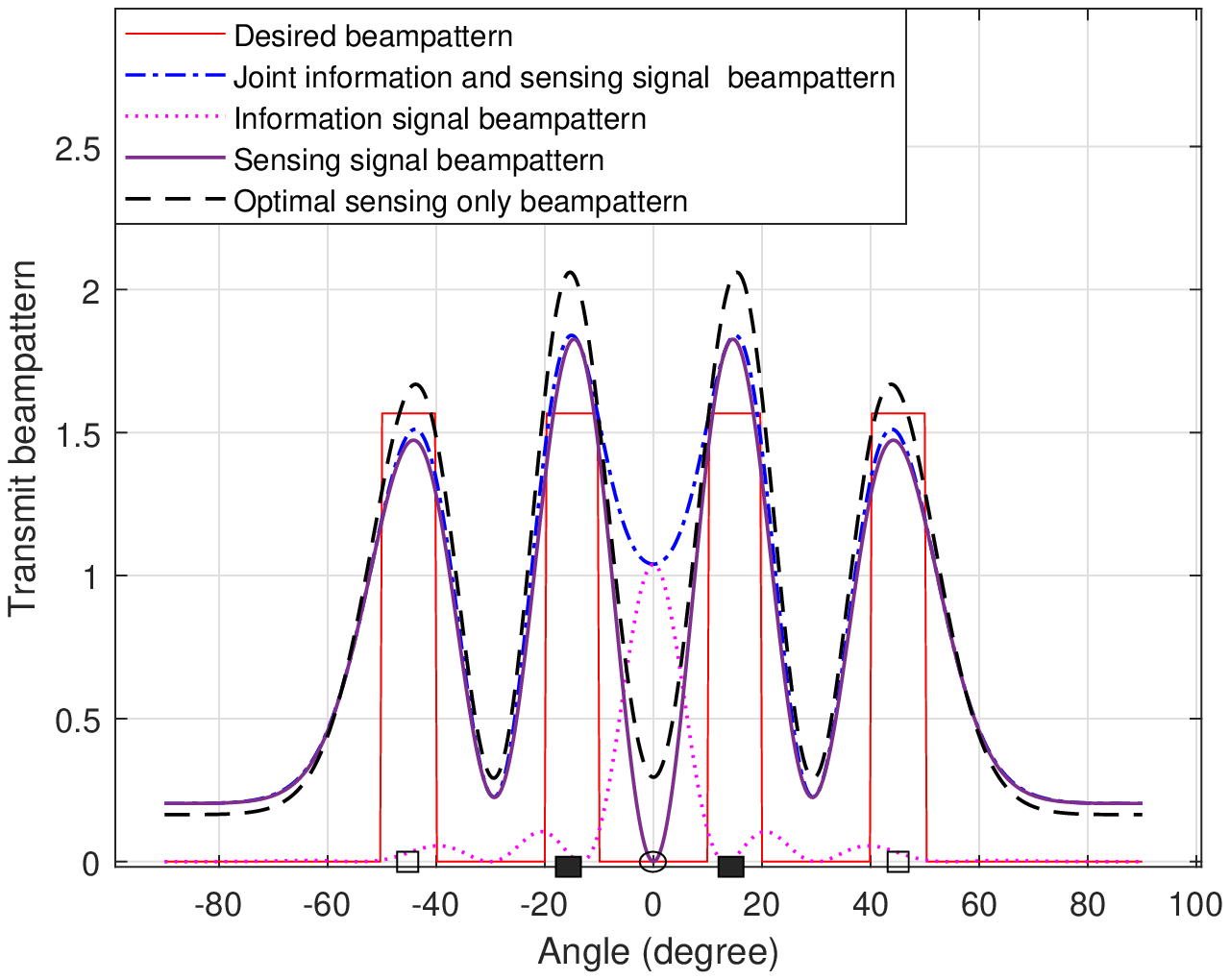}\centering\caption{\label{fig:10}Transmit beampattern with Gaussian CSI errors, $R_{0}=2.5\textrm{ bps/Hz}$,
$\rho=0.1$\textcolor{blue}{.} The circle indicates the direction
of the CU and the rectangles indicate directions of targets with solid
rectangles for the untrusted ones and hollow rectangles for the trusted
ones.}
\end{minipage}
\end{figure}

\subsection{Scenario with Gaussian CSI Errors}

Furthermore, we consider the scenario with Gaussian CSI errors of
eavesdroppers. We consider the Rician fading channels from the BS
to all eavesdroppers, i.e., $\boldsymbol{h}_{k}=\sqrt{\frac{K_{\mathrm{R}}}{K_{\mathrm{R}}+1}}\boldsymbol{h}_{k}^{\mathrm{LoS}}+\sqrt{\frac{1}{K_{\mathrm{R}}+1}}\boldsymbol{h}_{k}^{\mathrm{NLoS}}$
, $\forall k\in\mathcal{K}_{\mathrm{E}}$, where $K_{\mathrm{R}}=10$
is the Rician factor, $\hat{\boldsymbol{h}_{k}}=\sqrt{\frac{K_{\mathrm{R}}}{K_{\mathrm{R}}+1}}\boldsymbol{h}_{k}^{\mathrm{LoS}}=\sqrt{\frac{K_{\mathrm{R}}}{K_{\mathrm{R}}+1}}\sqrt{10^{-3}}\boldsymbol{a}(\theta_{k})$,
and $\boldsymbol{h}_{k}^{\mathrm{NLoS}}\sim\mathcal{CN}(\boldsymbol{0},10^{-3}\boldsymbol{I})$,
$\forall k\in\mathcal{K}_{\mathrm{E}}$. We set the estimated channel
as $\hat{\boldsymbol{h}_{k}}$ and $\boldsymbol{C}_{k}=\frac{1}{K_{\mathrm{R}}+1}10^{-3}\boldsymbol{I}$
for each eavesdropping target $k$. In the case of Gaussian CSI errors,
the separate design benchmark is highly sub-optimal, which may become
infeasible for problem (P2) when the secrecy rate and outage probability
requirements become stringent. Therefore, we do not consider the separate
design in this subsection, but focus on examining the impact of the
secrecy rate threshold $R_{0}$ and the outage threshold $\rho$.

Fig. \ref{fig:10} shows the achieved transmit beampattern with $R_{0}=2.5\textrm{ bps/Hz}$
and the outage threshold $\rho=0.1$. It is observed that the information
signal beam is designed towards the CU angle to ensure secure communication
with Gaussian CSI errors of eavesdroppers, which is similar to the
observation in Fig. \ref{fig:2} with perfect CSI and that in Fig.
\ref{fig:7} with the bounded CSI errors.

Fig. \ref{fig:11} shows the sensing beampattern error $D(\boldsymbol{W},\boldsymbol{S},\eta)$
versus the secrecy rate threshold $R_{0}$, in which $\rho$ is set
as $0.1,0.15,$ and 0.2. It is observed that the sensing beampattern
error increases when $\rho$ becomes small. This is due to the fact
that the secrecy outage constraint becomes stricter in this case,
and more transmit power should be allocated to information signals
towards the CU direction.

Next, Fig. \ref{fig:12} shows the estimation RMSE versus received
SNR for sensing, in which $\rho$ is set as $0.1$ and 0.15. It is
observed that a lower outage threshold of $\rho$ leads to a higher
estimation RMSE. This is because stricter secrecy outage requirement
leads to a higher sensing beampattern error (see Fig. \ref{fig:11})
with more severe beampattern distortion and higher cross-correlation
patterns, thus resulting in worse sensing performance. Our proposed
optimal design is observed to achieve RMSE close to the sensing-only
benchmark. This shows the robustness of our proposed design against
Gaussian CSI errors. 

\begin{figure}
\centering

\begin{minipage}[t]{0.48\textwidth}%
\centering\includegraphics[scale=0.48]{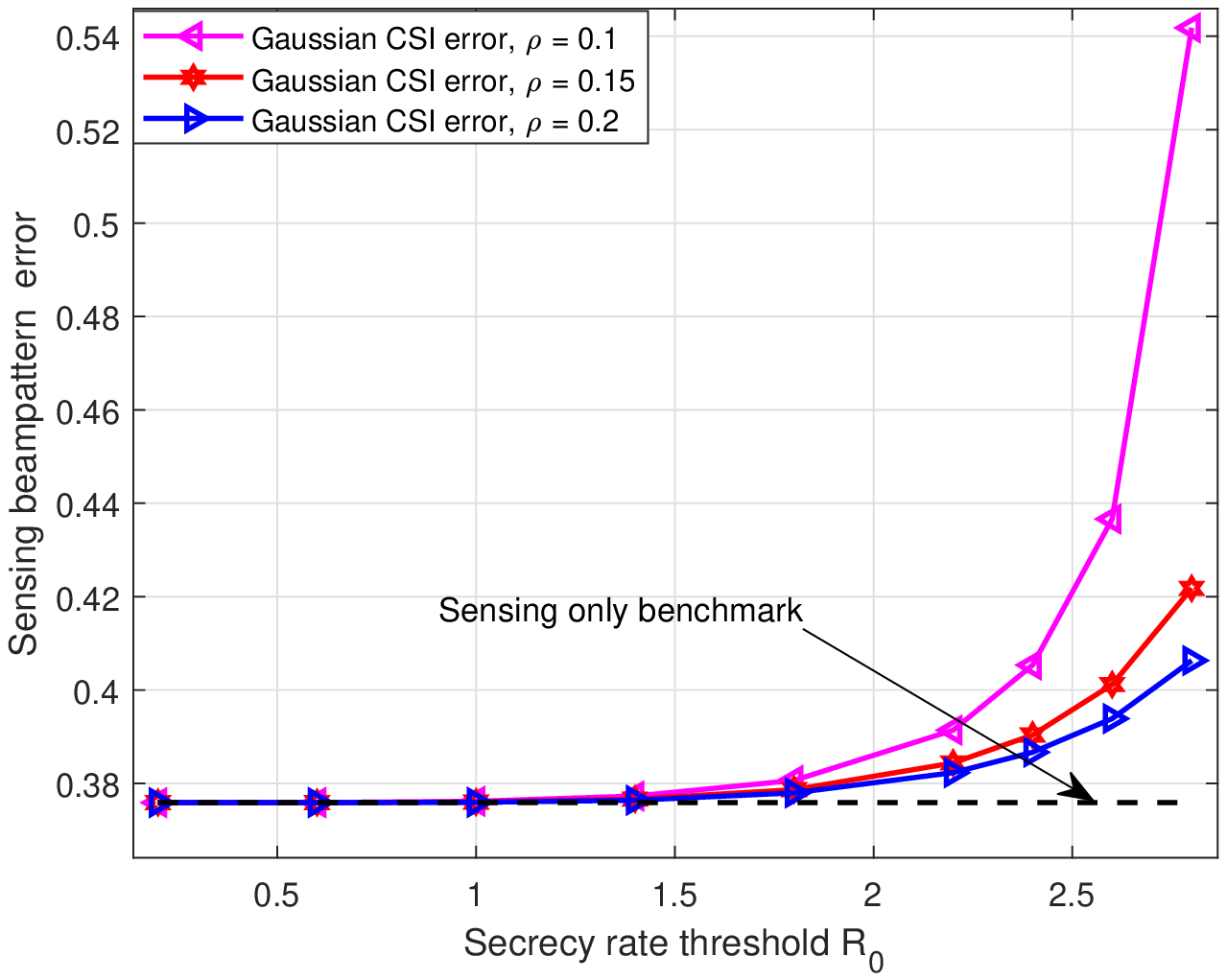}\centering\caption{\label{fig:11}The sensing beampattern error $D(\boldsymbol{W},\boldsymbol{S},\eta)$
versus the secrecy rate threshold $R_{0}$.}
\end{minipage}\hspace{0.15in}%
\begin{minipage}[t]{0.48\textwidth}%
 \centering

\includegraphics[scale=0.48]{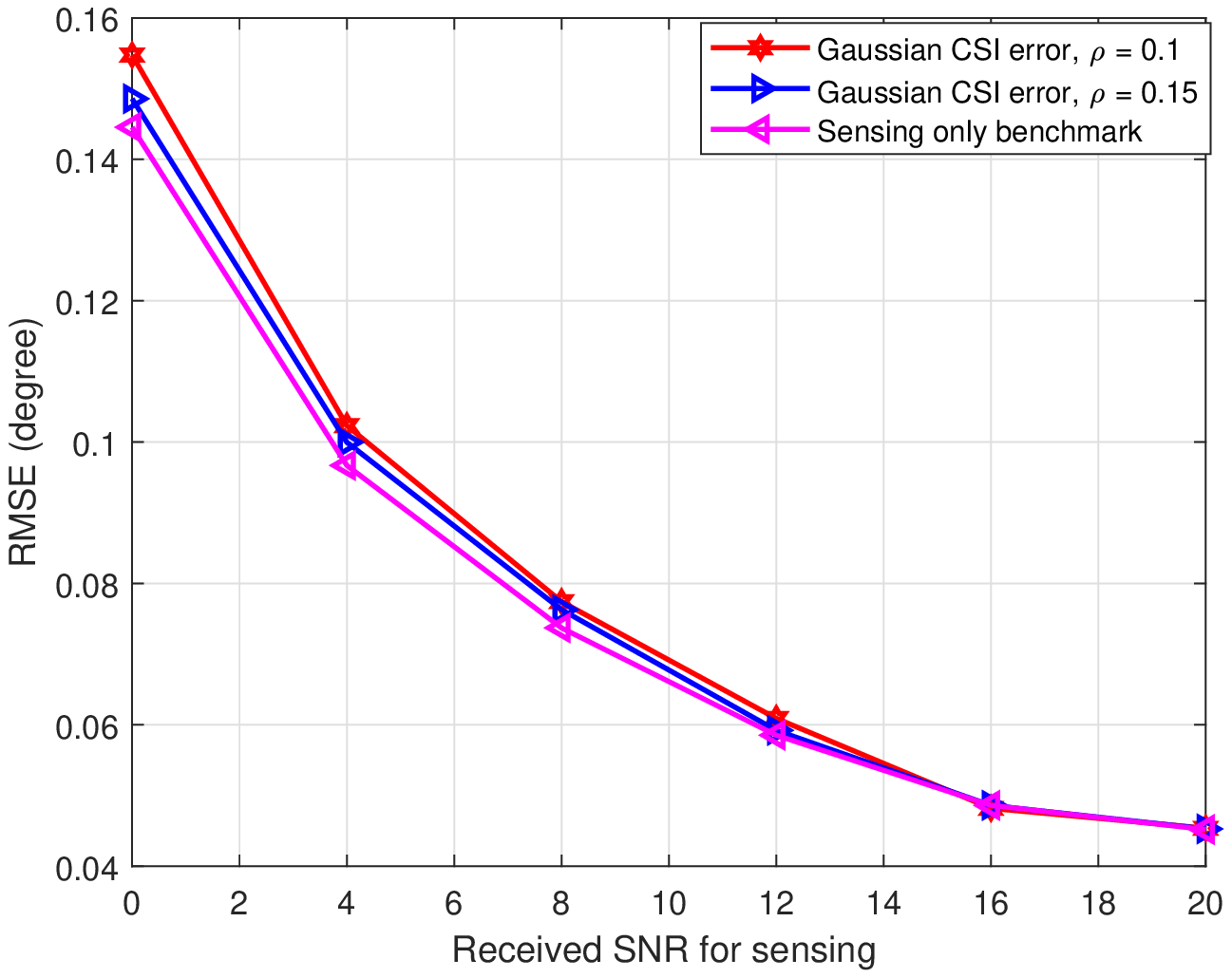}\centering\caption{\label{fig:12}RMSE versus received SNR for sensing with Gaussian
CSI errors.}
\end{minipage}
\end{figure}

\section{Conclusion}

This paper studied the robust transmit beamforming problem for a secure
ISAC system that consists of a BS, a single CU, and multiple sensing
targets. We considered the case of imperfect CSI of eavesdroppers
subject to bounded errors and Gaussian errors, and formulated the
worst-case secrecy rate and the secrecy outage constrained beampattern
optimization problems, respectively. Then, we optimally solved the
worst-case secrecy rate constrained beampattern matching problem by
using the technique of S-procedure, SDR, and 1D search, and rigorously
proved the tightness of the SDR. Next, to solve the secrecy outage
constrained beampattern matching problem, we adopted the safe approximation
based on the BTI to obtain a restricted problem, then solved the restricted
problem using the technique of SDR and 1D search, and finally constructed
a high-quality solution to the original problem. Finally, we provided
numerical results to evaluate the transmit beampattern and the sensing
performance, showing the non-trivial tradeoff between secure communication
and sensing. It was shown that our proposed robust design adjusts
the information and sensing beams to balance the tradeoffs among communicating
with CU, sensing targets, and confusing eavesdroppers, and achieve
desirable sensing transmit beampatterns while ensuring the CU's secrecy
requirements. 

\appendices{}

\section{Proof of Proposition 1}

First, it is observed from (\ref{eq:construction S}) that $\boldsymbol{S}^{*}+\boldsymbol{W}^{*}=\tilde{\boldsymbol{W}}^{*}+\tilde{\boldsymbol{S}}^{*}$.
As a result, $\boldsymbol{W}^{*}$, $\boldsymbol{S}^{*}$, and $\eta^{*}$
achieve the same objective value for problem (P1.3) as that achieved
by $\tilde{\boldsymbol{W}}^{*}$, $\tilde{\boldsymbol{S}}^{*}$, and
$\tilde{\eta}^{*}$. Note that $\boldsymbol{W}^{*}$, $\boldsymbol{S}^{*}$
also satisfy the constraints in (\ref{eq:uniform element power})
and (\ref{eq:semidefinite constraint}).

Next, since $\tilde{\boldsymbol{W}}^{*}\succeq\boldsymbol{0}$, we
have $\tilde{\boldsymbol{W}}^{*}=\boldsymbol{F}\boldsymbol{F}^{H}$.
Based on this, for any $\boldsymbol{v}\in\mathbb{C}^{N\times1}$,
it follows that 
\begin{equation}
\begin{array}[b]{l}
\boldsymbol{v}^{H}(\tilde{\boldsymbol{W}}^{*}-\boldsymbol{W}^{*})\boldsymbol{v}=\boldsymbol{v}^{H}\tilde{\boldsymbol{W}^{*}}\boldsymbol{v}-\boldsymbol{v}^{H}\frac{\tilde{\boldsymbol{W}^{*}}\boldsymbol{g}\boldsymbol{g}^{H}\tilde{\boldsymbol{W}^{*}}}{\boldsymbol{g}^{H}\tilde{\boldsymbol{W}^{*}}\boldsymbol{g}}\boldsymbol{v}\\
=\frac{1}{\boldsymbol{g}^{H}\tilde{\boldsymbol{W}^{*}}\boldsymbol{g}}(\boldsymbol{v}^{H}\tilde{\boldsymbol{W}^{*}}\boldsymbol{v}\boldsymbol{g}^{H}\tilde{\boldsymbol{W}^{*}}\boldsymbol{g}-\boldsymbol{v}^{H}\tilde{\boldsymbol{W}^{*}}\boldsymbol{g}\boldsymbol{g}^{H}\tilde{\boldsymbol{W}^{*}}\boldsymbol{v})\\
=\frac{1}{\boldsymbol{g}^{H}\tilde{\boldsymbol{W}^{*}}\boldsymbol{g}}(\|\boldsymbol{a}\|^{2}\|\boldsymbol{b}\|^{2}-\big|\boldsymbol{a}^{H}\boldsymbol{b}\big|^{2})\overset{\textrm{(a)}}{\geq}0,
\end{array}
\end{equation}
where $\boldsymbol{a}=\boldsymbol{F}^{H}\boldsymbol{v}\in\mathbb{C}^{N\times1},\boldsymbol{b}=\boldsymbol{F}^{H}\boldsymbol{g}\in\mathbb{C}^{N\times1}$,
and inequality (a) holds because of the Cauchy-Schwartz inequality.
Accordingly, we have $\tilde{\boldsymbol{W}}^{*}-\boldsymbol{W}^{*}\succeq\boldsymbol{0}$.
By using this together with (\ref{eq:construction S}), we have $\boldsymbol{S}^{*}\succeq\tilde{\boldsymbol{S}}^{*}\succeq\boldsymbol{0}$.

Furthermore, we prove that $\boldsymbol{W}^{*}$, $\boldsymbol{S}^{*}$,
$\gamma^{*}$, $\eta^{*}$, and $\{\lambda_{k}^{*}\}$ satisfy the
constraints in (\ref{eq:CU constraint}) and (\ref{eq:LMI constriant}).
On the one hand, it is clear from (\ref{eq:construction W}) and (\ref{eq:construction S})
that $\boldsymbol{g}^{H}\tilde{\boldsymbol{W}^{*}}\boldsymbol{g}=\boldsymbol{g}^{H}\boldsymbol{W}^{*}\boldsymbol{g}$
and $\boldsymbol{g}^{H}\tilde{\boldsymbol{S}}^{*}\boldsymbol{g}=\boldsymbol{g}^{H}\boldsymbol{S}{}^{*}\boldsymbol{g}$,
and therefore constraint (\ref{eq:CU constraint}) holds. On the other
hand, as $\tilde{\boldsymbol{W}}^{*}$, $\tilde{\boldsymbol{S}}^{*}$,
$\tilde{\gamma}^{*}$, $\tilde{\eta}^{*}$, and $\{\tilde{\lambda_{k}}^{*}\}$
satisfy the constraint in (\ref{eq:LMI constriant}), we have 
\begin{eqnarray*}
 & \left[\begin{array}{cc}
\tilde{\lambda_{k}}^{*}\boldsymbol{I}-\big(\tilde{\boldsymbol{W}}^{*}-\varphi(\tilde{\gamma}^{*})\tilde{\boldsymbol{S}}^{*}\big) & -\big(\tilde{\boldsymbol{W}}^{*}-\varphi(\tilde{\gamma}^{*})\tilde{\boldsymbol{S}}^{*}\big)\hat{\boldsymbol{h}}_{k}\\
-\hat{\boldsymbol{h}}_{k}^{H}\big(\tilde{\boldsymbol{W}}^{*}-\varphi(\tilde{\gamma}^{*})\tilde{\boldsymbol{S}}^{*}\big) & -\tilde{\lambda_{k}}^{*}\varepsilon_{k}^{2}-\hat{\boldsymbol{h}}_{k}^{H}\big(\tilde{\boldsymbol{W}}^{*}-\varphi(\tilde{\gamma}^{*})\tilde{\boldsymbol{S}}^{*}\big)\hat{\boldsymbol{h}}_{k}+\varphi(\tilde{\gamma}^{*})\sigma_{k}^{2}
\end{array}\right]\succeq\boldsymbol{0}, & \forall k\in\mathcal{K}_{\mathrm{E}}.
\end{eqnarray*}
It follows that 
\begin{eqnarray}
 &  & \left[\begin{array}{cc}
\lambda_{k}^{*}\boldsymbol{I}-\big(\boldsymbol{W}^{*}-\varphi(\gamma^{*})\boldsymbol{S}^{*}\big) & -\big(\boldsymbol{W}^{*}-\varphi(\gamma^{*})\boldsymbol{S}^{*}\big)\hat{\boldsymbol{h}}_{k}\\
-\hat{\boldsymbol{h}}_{k}^{H}\big(\boldsymbol{W}^{*}-\varphi(\gamma^{*})\boldsymbol{S}^{*}\big) & -\lambda_{k}^{*}\varepsilon_{k}^{2}-\hat{\boldsymbol{h}}_{k}^{H}\big(\boldsymbol{W}^{*}-\varphi(\gamma^{*})\boldsymbol{S}^{*}\big)\hat{\boldsymbol{h}}_{k}+\varphi(\gamma^{*})\sigma_{k}^{2}
\end{array}\right]\nonumber \\
 & - & \left[\begin{array}{cc}
\tilde{\lambda_{k}}^{*}\boldsymbol{I}-\big(\tilde{\boldsymbol{W}}^{*}-\varphi(\tilde{\gamma}^{*})\tilde{\boldsymbol{S}}^{*}\big) & -\big(\tilde{\boldsymbol{W}}^{*}-\varphi(\tilde{\gamma}^{*})\tilde{\boldsymbol{S}}^{*}\big)\hat{\boldsymbol{h}}_{k}\\
-\hat{\boldsymbol{h}}_{k}^{H}\big(\tilde{\boldsymbol{W}}^{*}-\varphi(\tilde{\gamma}^{*})\tilde{\boldsymbol{S}}^{*}\big) & -\tilde{\lambda_{k}}^{*}\varepsilon_{k}^{2}-\hat{\boldsymbol{h}}_{k}^{H}\big(\tilde{\boldsymbol{W}}^{*}-\varphi(\tilde{\gamma}^{*})\tilde{\boldsymbol{S}}^{*}\big)\hat{\boldsymbol{h}}_{k}+\varphi(\tilde{\gamma}^{*})\sigma_{k}^{2}
\end{array}\right]\nonumber \\
 & = & \left[\begin{array}{cc}
\Delta\boldsymbol{W}-\varphi(\gamma^{*})\Delta\boldsymbol{S} & \big(\Delta\boldsymbol{W}-\varphi(\gamma^{*})\Delta\boldsymbol{S}\big)\hat{\boldsymbol{h}}_{k}\\
\hat{\boldsymbol{h}}_{k}^{H}\big(\Delta\boldsymbol{W}-\varphi(\gamma^{*})\Delta\boldsymbol{S}\big) & \hat{\boldsymbol{h}}_{k}^{H}\big(\Delta\boldsymbol{W}-\varphi(\gamma^{*})\Delta\boldsymbol{S}\big)\hat{\boldsymbol{h}}_{k}
\end{array}\right],\forall k\in\mathcal{K}_{\mathrm{E}},\label{eq:difference matrix}
\end{eqnarray}
where $\Delta\boldsymbol{W}=\tilde{\boldsymbol{W}}^{*}-\boldsymbol{W}^{*}\succeq\boldsymbol{0}$,
$\Delta\boldsymbol{S}=\tilde{\boldsymbol{S}}^{*}-\boldsymbol{S}^{*}\preceq\boldsymbol{0}$,
and thus $\Delta\boldsymbol{W}-\varphi(\gamma^{*})\Delta\boldsymbol{S}\succeq\boldsymbol{0}$.
Therefore, we only need to prove that the right-hand-side (RHS) of
(\ref{eq:difference matrix}) is positive semi-definite to satisfy
(\ref{eq:LMI constriant}). Towards this end, we introduce the following
theorem about positive semi-definete block matrices. 
\begin{thm}
\label{thm:positive matrix}\cite{horn2012matrix} Let $\boldsymbol{H}=\left[\begin{array}{cc}
\boldsymbol{A} & \boldsymbol{B}\\
\boldsymbol{B}^{H} & \boldsymbol{C}
\end{array}\right]\in\mathbb{S}^{L+T}$ , with $\boldsymbol{A}\in\mathbb{S}^{L}$ and $\boldsymbol{C}\in\mathbb{S}^{T}$.
The following two statements are equivalent:

(a) $\boldsymbol{H}$ is positive semi-definite.

(b) $\boldsymbol{A}$ and $\boldsymbol{C}$ are positive semi-definite,
and there is a contraction $\boldsymbol{X}\in\mathbb{C}^{L\times T}$
such that $\boldsymbol{B}=\boldsymbol{A}^{1/2}\boldsymbol{X}\boldsymbol{C}^{1/2}$. 
\end{thm}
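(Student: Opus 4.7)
The plan is to prove (a)$\Leftrightarrow$(b) by a symmetric congruence argument in one direction and a test-vector argument combined with a perturbation/compactness limit in the other. The key auxiliary object is the $(L{+}T)\times(L{+}T)$ normalized matrix $\boldsymbol{M}=\left[\begin{array}{cc}\boldsymbol{I}_{L} & \boldsymbol{X}\\ \boldsymbol{X}^{H} & \boldsymbol{I}_{T}\end{array}\right]$, together with the factorization $\boldsymbol{H}=\boldsymbol{D}\boldsymbol{M}\boldsymbol{D}$, where $\boldsymbol{D}=\mathrm{diag}(\boldsymbol{A}^{1/2},\boldsymbol{C}^{1/2})$.

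For the direction (b)$\Rightarrow$(a), I would substitute $\boldsymbol{B}=\boldsymbol{A}^{1/2}\boldsymbol{X}\boldsymbol{C}^{1/2}$ to obtain $\boldsymbol{H}=\boldsymbol{D}\boldsymbol{M}\boldsymbol{D}$ explicitly, and then verify $\boldsymbol{M}\succeq\boldsymbol{0}$ under the assumption $\|\boldsymbol{X}\|\le 1$: for arbitrary $\boldsymbol{u}\in\mathbb{C}^{L}$ and $\boldsymbol{v}\in\mathbb{C}^{T}$, the quadratic form evaluates to $\|\boldsymbol{u}\|^{2}+2\mathrm{Re}(\boldsymbol{u}^{H}\boldsymbol{X}\boldsymbol{v})+\|\boldsymbol{v}\|^{2}\ge\|\boldsymbol{u}\|^{2}-2\|\boldsymbol{u}\|\|\boldsymbol{X}\boldsymbol{v}\|+\|\boldsymbol{v}\|^{2}\ge(\|\boldsymbol{u}\|-\|\boldsymbol{v}\|)^{2}\ge 0$, using Cauchy--Schwartz and the contraction bound $\|\boldsymbol{X}\boldsymbol{v}\|\le\|\boldsymbol{v}\|$. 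Hence $\boldsymbol{H}$ is a Hermitian congruence of a PSD matrix, so $\boldsymbol{H}\succeq\boldsymbol{0}$.

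For the direction (a)$\Rightarrow$(b), I would first note that $\boldsymbol{A}$ and $\boldsymbol{C}$ are PSD as principal submatrices of $\boldsymbol{H}\succeq\boldsymbol{0}$. In the nondegenerate subcase $\boldsymbol{A}\succ\boldsymbol{0}$, $\boldsymbol{C}\succ\boldsymbol{0}$, I would set $\boldsymbol{X}=\boldsymbol{A}^{-1/2}\boldsymbol{B}\boldsymbol{C}^{-1/2}$, which automatically gives $\boldsymbol{B}=\boldsymbol{A}^{1/2}\boldsymbol{X}\boldsymbol{C}^{1/2}$, and verify $\|\boldsymbol{X}\|\le 1$ by substituting $\boldsymbol{u}=\boldsymbol{A}^{-1/2}\boldsymbol{u}'$, $\boldsymbol{v}=\boldsymbol{C}^{-1/2}\boldsymbol{v}'$ into the PSD inequality to obtain $\|\boldsymbol{u}'\|^{2}+2\mathrm{Re}(\boldsymbol{u}'^{H}\boldsymbol{X}\boldsymbol{v}')+\|\boldsymbol{v}'\|^{2}\ge0$ for all $\boldsymbol{u}',\boldsymbol{v}'$. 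Choosing $\boldsymbol{u}'=-\boldsymbol{X}\boldsymbol{v}'$ then collapses this to $\|\boldsymbol{v}'\|^{2}\ge\|\boldsymbol{X}\boldsymbol{v}'\|^{2}$ for every $\boldsymbol{v}'$, i.e., $\|\boldsymbol{X}\|\le 1$.

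For the general (possibly singular) subcase, I would use a perturbation-and-limit argument: set $\boldsymbol{A}_{\varepsilon}=\boldsymbol{A}+\varepsilon\boldsymbol{I}_{L}$, $\boldsymbol{C}_{\varepsilon}=\boldsymbol{C}+\varepsilon\boldsymbol{I}_{T}$, so that $\boldsymbol{H}_{\varepsilon}=\boldsymbol{H}+\varepsilon\boldsymbol{I}_{L+T}\succeq\boldsymbol{0}$ has invertible diagonal blocks. The nondegenerate case then supplies a contraction $\boldsymbol{X}_{\varepsilon}=\boldsymbol{A}_{\varepsilon}^{-1/2}\boldsymbol{B}\boldsymbol{C}_{\varepsilon}^{-1/2}$; by compactness of the closed unit ball in finite dimensions I extract a convergent subsequence $\boldsymbol{X}_{\varepsilon_{n}}\to\boldsymbol{X}$, still a contraction, and pass to the limit in $\boldsymbol{A}_{\varepsilon_{n}}^{1/2}\boldsymbol{X}_{\varepsilon_{n}}\boldsymbol{C}_{\varepsilon_{n}}^{1/2}=\boldsymbol{B}$ using continuity of the PSD square root to recover $\boldsymbol{B}=\boldsymbol{A}^{1/2}\boldsymbol{X}\boldsymbol{C}^{1/2}$. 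The main obstacle is exactly this singular subcase: the identity $\boldsymbol{B}=\boldsymbol{A}^{1/2}\boldsymbol{X}\boldsymbol{C}^{1/2}$ implicitly requires the range inclusions $\mathrm{range}(\boldsymbol{B})\subseteq\mathrm{range}(\boldsymbol{A}^{1/2})$ and $\mathrm{range}(\boldsymbol{B}^{H})\subseteq\mathrm{range}(\boldsymbol{C}^{1/2})$, and these must be shown to follow from $\boldsymbol{H}\succeq\boldsymbol{0}$ so that the limiting $\boldsymbol{X}$ is consistent. I would verify them by testing $\boldsymbol{H}\succeq\boldsymbol{0}$ on block vectors with components supported in $\ker\boldsymbol{A}$ or $\ker\boldsymbol{C}$, which forces the off-diagonal action of $\boldsymbol{B}$ on those null spaces to vanish and thereby yields precisely the required range inclusions.
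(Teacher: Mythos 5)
The paper does not actually prove Theorem~\ref{thm:positive matrix}: it is quoted from \cite{horn2012matrix} and used as a black box in Appendix~A, so any proof you give is necessarily ``different from the paper's,'' and yours is correct. The congruence $\boldsymbol{H}=\boldsymbol{D}\boldsymbol{M}\boldsymbol{D}$ with $\boldsymbol{D}=\mathrm{diag}(\boldsymbol{A}^{1/2},\boldsymbol{C}^{1/2})$ disposes of (b)$\Rightarrow$(a) cleanly; the test vector $\boldsymbol{u}'=-\boldsymbol{X}\boldsymbol{v}'$ is exactly the right choice to extract $\|\boldsymbol{X}\boldsymbol{v}'\|\leq\|\boldsymbol{v}'\|$ in the nonsingular converse; and the $\varepsilon$-regularization together with compactness of the closed unit ball of contractions correctly settles the singular case, since $\boldsymbol{H}+\varepsilon\boldsymbol{I}$ has the same off-diagonal block $\boldsymbol{B}$ and strictly positive definite diagonal blocks. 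One remark: your closing worry about range inclusions is unnecessary. Because $(\boldsymbol{A}+\varepsilon_{n}\boldsymbol{I})^{1/2}\to\boldsymbol{A}^{1/2}$, $(\boldsymbol{C}+\varepsilon_{n}\boldsymbol{I})^{1/2}\to\boldsymbol{C}^{1/2}$, and $\boldsymbol{X}_{\varepsilon_{n}}\to\boldsymbol{X}$ along the chosen subsequence, continuity of matrix multiplication already yields $\boldsymbol{B}=\lim_{n}(\boldsymbol{A}+\varepsilon_{n}\boldsymbol{I})^{1/2}\boldsymbol{X}_{\varepsilon_{n}}(\boldsymbol{C}+\varepsilon_{n}\boldsymbol{I})^{1/2}=\boldsymbol{A}^{1/2}\boldsymbol{X}\boldsymbol{C}^{1/2}$ with no further consistency check; the inclusions $\mathrm{range}(\boldsymbol{B})\subseteq\mathrm{range}(\boldsymbol{A}^{1/2})$ and $\mathrm{range}(\boldsymbol{B}^{H})\subseteq\mathrm{range}(\boldsymbol{C}^{1/2})$ are consequences of this identity rather than prerequisites for it. (Your kernel-testing argument for those inclusions is itself valid, just redundant.)
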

A given matrix is a contraction if its largest singular value is less
than or equal to 1. Recall that $\Delta\boldsymbol{W}-\varphi(\gamma^{*})\Delta\boldsymbol{S}\succeq\boldsymbol{0}$,
we have $\hat{\boldsymbol{h}}_{k}^{H}\big(\Delta\boldsymbol{W}-\varphi(\gamma^{*})\Delta\boldsymbol{S}\big)\hat{\boldsymbol{h}}_{k}\geq0$.
Here, notice that $\Big(\hat{\boldsymbol{h}}_{k}^{H}\big(\Delta\boldsymbol{W}-\varphi(\gamma^{*})\Delta\boldsymbol{S}\big)\hat{\boldsymbol{h}}_{k}\Big)^{1/2}=\|\big(\Delta\boldsymbol{W}-\varphi(\gamma^{*})\Delta\boldsymbol{S}\big)^{1/2}\hat{\boldsymbol{h}}_{k}\|$.
Then, we select 
\begin{equation}
\boldsymbol{a}=\frac{1}{\|\big(\Delta\boldsymbol{W}-\varphi(\gamma^{*})\Delta\boldsymbol{S}\big)^{1/2}\hat{\boldsymbol{h}}_{k}\|}\cdot\big(\Delta\boldsymbol{W}-\varphi(\gamma^{*})\Delta\boldsymbol{S}\big)^{1/2}\hat{\boldsymbol{h}}_{k}.
\end{equation}
Due to $\|\boldsymbol{a}\|=1$, $\boldsymbol{a}$ is a contraction.
Thus, we have 
\begin{equation}
\big(\Delta\boldsymbol{W}-\varphi(\gamma^{*})\Delta\boldsymbol{S}\big)^{1/2}\boldsymbol{a}\|\big(\Delta\boldsymbol{W}-\varphi(\gamma^{*})\Delta\boldsymbol{S}\big)^{1/2}\hat{\boldsymbol{h}}_{k}\|=\big(\Delta\boldsymbol{W}-\varphi(\gamma^{*})\Delta\boldsymbol{S}\big)\hat{\boldsymbol{h}}_{k}.\label{eq:psdm}
\end{equation}
By combining $\Delta\boldsymbol{W}-\varphi(\gamma^{*})\Delta\boldsymbol{S}\succeq\boldsymbol{0}$
and (\ref{eq:psdm}), we prove that the RHS of (\ref{eq:difference matrix})
is positive semi-definite. As a result, constraint (\ref{eq:LMI constriant})
is satisfied.

By combining the results above, it is proved that $\boldsymbol{W}^{*}$,
$\boldsymbol{S}^{*}$, $\gamma^{*}$, $\eta^{*}$, and $\{\lambda_{k}^{*}\}$
are optimal for problem (P1.3). Notice that $\textrm{rank}(\boldsymbol{W}^{*})=1$.
This thus completes the proof.

\section{Proof of Proposition \ref{prop:rank-one construction for prob error}}

Based on the similar proof techniques as in Appendix A, we have 
\begin{eqnarray}
 &  & \boldsymbol{S}^{\star}\succeq\tilde{\boldsymbol{S}}^{\star}\succeq\boldsymbol{0},\label{eq:Delta S}\\
 &  & \tilde{\boldsymbol{W}}^{\star}\succeq\boldsymbol{W}^{\star}\succeq\boldsymbol{0},\label{eq:Delta W}\\
 &  & \tilde{\boldsymbol{W}}^{\star}+\tilde{\boldsymbol{S}}^{\star}=\boldsymbol{W}^{\star}+\boldsymbol{S}^{\star}.\label{eq:sum}
\end{eqnarray}
It follows from (\ref{eq:sum}) that $\boldsymbol{W}^{\star},\boldsymbol{S}^{\star},\textrm{ and }\eta^{\star}$
achieve the same objective value for $D(\boldsymbol{W},\boldsymbol{S},\eta)$
as that achieved by $\tilde{\boldsymbol{W}}^{\star},\tilde{\boldsymbol{S}}^{\star},\textrm{ and }\tilde{\eta}^{\star}$.
Therefore, we need to verify that the constructed solution $\boldsymbol{W}^{\star},\boldsymbol{S}^{\star},\textrm{ and }\eta^{\star}$
satisfy (\ref{eq:secrecy non-outage}), (\ref{eq:Power constraint}),
and ($\textrm{\ref{eq:Semi definite condtraint}}$). Based on (\ref{eq:sum}),
$\boldsymbol{g}^{H}\tilde{\boldsymbol{W}^{\star}}\boldsymbol{g}=\boldsymbol{g}^{H}\boldsymbol{W}^{\star}\boldsymbol{g}$,
and $\boldsymbol{g}^{H}\tilde{\boldsymbol{S}}^{*}\boldsymbol{g}=\boldsymbol{g}^{H}\boldsymbol{S}{}^{*}\boldsymbol{g}$,
(\ref{eq:Power constraint}) and (\ref{eq:Semi definite condtraint})
are satisfied. Then, we focus on proving that (\ref{eq:secrecy non-outage})
is satisfied. Based on (\ref{eq:Delta S}) and (\ref{eq:Delta W}),
we have
\begin{equation}
\frac{\boldsymbol{h}_{k}^{H}\boldsymbol{W}^{\star}\boldsymbol{h}_{k}}{\boldsymbol{h}_{k}^{H}\boldsymbol{S}^{\star}\boldsymbol{h}_{k}+\sigma_{k}^{2}}\leq\frac{\boldsymbol{h}_{k}^{H}\tilde{\boldsymbol{W}}^{\star}\boldsymbol{h}_{k}}{\boldsymbol{h}_{k}^{H}\tilde{\boldsymbol{S}}^{\star}\boldsymbol{h}_{k}+\sigma_{k}^{2}},\forall\boldsymbol{h}_{k}\in\mathbb{C}^{N\times1}.
\end{equation}
As a result,
\begin{eqnarray}
 &  & \textrm{Pr}\big(\log_{2}\big(1+\gamma\big)-\underset{k\in\mathcal{K}_{\mathrm{E}}}{\max}\log_{2}\big(1\negthickspace+\negthickspace\frac{\boldsymbol{h}_{k}^{H}\boldsymbol{W}^{\star}\boldsymbol{h}_{k}}{\boldsymbol{h}_{k}^{H}\boldsymbol{S}^{\star}\boldsymbol{h}_{k}+\sigma_{k}^{2}}\big)\geq R_{0}\big)\nonumber \\
 & \geq & \textrm{Pr}\big(\log_{2}\big(1+\gamma\big)-\underset{k\in\mathcal{K}_{\mathrm{E}}}{\max}\log_{2}\big(1\negthickspace+\frac{\boldsymbol{h}_{k}^{H}\tilde{\boldsymbol{W}}^{\star}\boldsymbol{h}_{k}}{\boldsymbol{h}_{k}^{H}\tilde{\boldsymbol{S}}^{\star}\boldsymbol{h}_{k}+\sigma_{k}^{2}}\big)\geq R_{0}\big)\geq1-\rho.
\end{eqnarray}

This completes the proof.

\appendices{}

{\footnotesize{}\bibliographystyle{IEEEtran}
\bibliography{IEEEabrv,IEEEexample,my_ref}
}{\footnotesize\par}

{\footnotesize{}

{\footnotesize{}

{\footnotesize{}

{\footnotesize{}

{\footnotesize{}

{\footnotesize{}

{\footnotesize{}
\ifCLASSOPTIONcaptionsoff \newpage\fi}{\footnotesize\par}

{\footnotesize{}

{\footnotesize{}

{\footnotesize{}

{\footnotesize{}

{\footnotesize{}

{\footnotesize{}

{\footnotesize{}

{\footnotesize{}

{\footnotesize{}

{\footnotesize{}
\end{document}